\newif\ifarxiv
\title{Design a Win-Win Strategy That Is Fair to Both Service Providers and Tasks When Rejection Is Not an Option}
\author{
Yohai Trabelsi$^1$
\and
Pan Xu$^2$\and
Sarit Kraus$^{1}$\\
\affiliations
$^1$Department of Computer Science, 
Bar-Ilan University, Ramat Gan, Israel\\
$^2$New Jersey Institute of Technology, Newark, NJ, USA\\
\emails
yohai.trabelsi@gmail.com,
pxu@njit.edu,
sarit@cs.biu.ac.il
}
\begin{document}
\maketitle 

\begin{abstract}

Assigning tasks to service providers is a frequent procedure across various applications.  Often the tasks arrive dynamically while the service providers remain static. Preventing task rejection caused by service provider overload is of utmost significance.
To ensure a positive experience in relevant applications for both service providers and tasks, fairness must be considered. 
To address the issue, we model the problem as an online matching within a bipartite graph and tackle two minimax problems: one focuses on minimizing the highest waiting time of a task, while the other aims to minimize the highest workload of a service provider. We show that the second problem can be expressed as a linear program and thus solved efficiently while maintaining a reasonable approximation to the objective of the first problem. We developed novel methods that utilize the two minimax problems. We conducted extensive simulation experiments using real data and demonstrated that our novel heuristics, based on the linear program, performed remarkably well.
\end{abstract}

\section{Introduction}
\label{sec:intro}
In resource allocation, numerous problems can be represented as online matching in bipartite graphs. One side of the graph comprises service providers (interchangeably called workers in this paper), while the other consists of allocated task types. The graph's edges indicate the qualifications of service providers to perform tasks of specific types.

In online matching problems, a common scenario involves one dynamic side and one static side. This dynamic-static setup finds application in various contexts, such as matching riders(dynamic) to drivers(static) \cite{dickerson2021allocation}, connecting search queries(dynamic) to advertisers in sponsored search(static) \cite{delong2022online}, and facilitating the teleoperation of autonomous vehicles (AVs) \cite{viden}.
The primary objective in these problems is to optimize some criteria from the perspective of the allocator. 

Some other works are dedicated to optimizing allocation fairness. For example, in the domain of ride-sourcing, a method to achieve allocation fairness was proposed in 
\cite{lesmana2019balancing}. Additionally, certain studies address cases where fairness should be maintained for both online tasks and offline workers \cite{esmaeili2022rawlsian}.

Our work is motivated by the teleoperation of AVs that has garnered increasing attention recently ~(e.g., \cite{zhang2020toward,viden,tener2022driving}). The primary role of teleoperation is to aid AVs by intervening in challenging driving situations\footnote{As mentioned in~\cite{tener2022driving}, the AVs will need this intervention, at least in the near future.}. Ensuring a fair allocation of teleoperators to driving tasks is crucial for enhancing the satisfaction of both teleoperators and AVs' users. Particularly, if certain intervention requests have significantly longer waiting times or if some teleoperators are disproportionately busier than others, such imbalances can lead to dissatisfaction among those affected. In addition, as a person in the vehicle is awaiting the teleoperator's intervention, a rejection of a request is unacceptable. Another property of this application is that the teleoperators (workers) are reusable, which means they are ready to perform a new intervention request (task) once they finish a previously allocated request.

We model the problem as online matching in a bipartite graph and propose several approaches to optimize fairness for both the tasks (e.g., intervention requests) and the workers (e.g., teleoperators) involved in the process. Our notion of fairness is aligned with Rawls' theory of justice \cite{rawls1999theory}.

We introduce two minimax problems within the given context. The first concerns fairness regarding tasks relative to waiting times, while the second focuses on Rawlsian fairness for service providers based on their workload. In both scenarios, task rejection is not permissible. We demonstrate that the second problem can be efficiently formulated as a linear problem. Notably, the solution to the second problem mirrors the first when task durations from each worker conform to the same distribution. In cases where this isn't true, we show that the second problem's solution approximates the first problem's solution, supported by a provable approximation ratio.
Our study concludes with extensive simulations that underscore the efficacy of these minimax problems. Furthermore, we devise innovative heuristics that leverage the minimax solutions. These heuristics enhance task fairness while preserving favorable outcomes for worker fairness.

{\bf Our main contributions are:}
(1) We propose two models to promote fairness among tasks and workers. 
(2) We present an LP-based algorithmic framework, which can exactly solve fairness maximization among workers and approximately among tasks, and we provide a tight approximation bound. 
(3) We empirically implement and compare different methods, including several baselines, on datasets involving the teleoperation of AVs.

\subsection{Related Work}

In this section, we describe previous works about fair allocation and allocation with delays. Notably, to our knowledge, our work distinguishes itself by being the first to consider fairness and allocation delays together.

\paragraph{Fair allocation}

Some studies address fair allocation, focusing on only one side of the graph, as seen in \cite{ma2020group}. Although their fairness approach resembles ours, it pertains solely to one side of the graph, which falls short of our requirements. Other research, like \cite{patro2020fairrec}, deals with fairness in recommendation systems. However, the fairness objectives in recommendation systems significantly differ from those in task allocation contexts. Practical solutions for enhancing fairness for both service providers and tasks are explored in works such as \cite{zhou2023subgroup}. Regrettably, this branch of research lacks theoretical performance bounds for their solutions.
The fairness principles in~\cite{esmaeili2022rawlsian} closely align with ours. They consider both workers (offline side) and tasks (online side), embracing Rawlsian welfare \cite{rawls1958justice}. Nonetheless, task rejection is permissible in their scenario if workers are unavailable.

\paragraph{Allocation with delayed assignments}
The original online matching problem was introduced in \cite{karp}, where static nodes (workers) are instantly paired with dynamic nodes (tasks) upon arrival. However, real scenarios often lack immediate worker availability for tasks, prompting consideration for task execution delays over outright rejection.
Numerous works tackle resource allocation with potential task delays. However, many of these approaches (e.g., \cite{righter1987stochastic,li2023fully}) prioritize utility maximization without factoring in task wait times or worker workload. Some leverage reinforcement learning for such issues yet often make batch decisions, leading to suboptimal outcomes. Moreover, theoretical guarantees are frequently absent.
An LP-based method for delayed allocations is presented in \cite{viden}, optimizing a complex utility function that accounts for task waiting times but overlooks worker workload.

Another pertinent domain involves queue admission control systems with multiple classes. Here, diverse customer types (tasks) arrive dynamically, and a decision-maker determines which task to accept, as demonstrated in \cite{rigter2022optimal}. However, several studies in this realm do not distinguish between workers, while others permit task rejection.
To our knowledge, the problem of two-sided fair allocation when task rejection is not allowed has not yet been addressed.

\section{Preliminaries}
\setlength{\tabcolsep}{2pt}
\begin{table}[t!]
\label{table:notation}
\begin{tabular}{ll } 
 \hline
$G$ & Input network graph $G=(I,J,E)$. \\
$I$ ($J$) & Set of worker (task) types.\\
$\cN_i$ ($\cN_j$) & Set of neighbors of $i$ ($j$).  \\
$i \sim j$ ($j \sim i$) & Equivalent to $i \in \cN_j$ ($j \in \cN_i$). \\
$\lam_j$ & Arrival rate of task type $j \in J$.\\
$\lam_i$ & Arrival rate on worker $i \in I$.\\
$\Exp(\mu)$ & Exponential distribution of rate $\mu>0$.\\
 $\Exp(\mu_{ij})$ & Service time taken by worker $i$ to service $j$.\\
$\rho_i \in [0,1]$ & Workload of worker $i \in I$.\\
$w_j$ & Expected (absolute) waiting time of $j$; see Eqn.~\eqref{eqn:w_j}. \\
$\bw_j$ & Expected (relative) waiting time of $j$; see Eqn.~\eqref{eqn:bw_j}. \\
$\kap \ge 1$ & $\max_{i \in I} \sbp{\max_{j \sim i,j'\sim i}\mu_{ij}/\mu_{i,j'}}$.\\
 \hline
\end{tabular}
\small
\caption{A glossary of notations  throughout this paper.}
\end{table}
Suppose we use a bipartite graph $G=(I,J,E)$ to model the worker-task network, where $I$ denotes the set of offline workers (\eg teleoperators), $J$ the set of types of tasks, and an edge $e=(i,j)$ indicates the feasibility of worker $i$ to serve the task (of type) $j$. 
Note that at certain points within this paper, we abuse the notation by referring to $j$ as a task instead of a task type. We also abuse the notation by referring to an edge $e=(i,j)$ as $(ij)$.
Tasks of type $j \in J$ arrive following an independent Poisson process of rate $\lam_j>0$. For each edge $e=(i,j) \in E$, we assume it takes worker $i$ an exponentially distributed service time\footnote{This assumption is justified in \cite{devore2008probability}. Note that the theoretical analysis does not depend on it. We could use any distribution if the mean and the variance of service time are known.} of rate $\mu_{ij}>0$ to complete a task of type $j$ (\ie with mean of $1/\mu_{ij}$)\footnote{Note that the assumption does not necessarily suggest the most likely outcome is for tasks to be finished in an extremely short time. 
Consider a task type with an exponentially distributed service time of rate $\mu$, denoted as $X=\Exp(\mu)$. We observe that for any given threshold $a>0$, $\Pr[X \ge a] =e^{-\mu a}$, which can be close to one when $\mu$ is small.
}.  For each worker $i$ and task $j$, let $\cN_i \subseteq J$ and $\cN_j \subseteq I$ denote the set of neighbors of $i$ and $j$ in the graph $G$. \tbf{The assigning rule} is as follows. Upon the arrival of a task of type $j$, we (as the central coordinator)  have to assign it to a feasible worker $i \in \cN_j$ immediately: if $i$ is free (or available) at that time, then $i$ will serve $j$ right away; otherwise, $j$ will join the virtual queue of $i$ and it will stay there until being served by $i$.  

\subsection{Allocation Policy and Related Concepts}
Consider an allocation policy $\pi(\x)$ (possibly randomized), characterized as a vector $\x=\{x_{ij}| (ij) \in E\}$, where $x_{ij} \in [0,1]$ denotes the percentage of task (of type) $j$ assigned to and served by worker $i$. In the following, we discuss a few important properties and concepts related to $\pi(\x)$. Let $\cQ_i$ be the virtual queue maintained by worker $i \in I$.

\xhdr{Arrival rate on $\cQ_i$, denoted by $\lam_i$}. Observe that $\x=(x_{ij})$ can be viewed alternatively as the probability that $\pi$ assigns each arriving $j$ to $i$. Thus, we claim that $\cQ_i$ admits a Poisson arrival process of rate $\lam_i:=\sum_{j \in \cN_i} \lam_j \cdot x_{ij}$.
By the property of the Poisson process (See section 2.3.2 at \cite{gallager2011discrete}), conditioning on the arrival of task (of type) 
$\bj \in J$ on $i$,
we claim that $\Pr[\bj=j]=x_{ij} \cdot \lam_j/\lam_i$ for each $j \in \cN_i$.

\xhdr{Service time on $\cQ_i$, denoted by $\cS_i$}. The analysis above shows that the task joining $\cQ_i$ is of type $j \in \cN_i$ with probability equal to  $ x_{ij} \cdot \lam_j/\lam_i$. Thus, the overall service time $\cS_i=\sum_{j \in \cN_i} \chi_{ij} \cdot \Exp(\mu_{ij})$, where $\chi_{ij}=1$ indicates that the task joining $i$ is of type $j$ with $\E[\chi_{ij}]=x_{ij} \cdot \lam_j/\lam_i$, and $\Exp(\mu_{ij})$ represents the exponentially distributed service time of $i$ for $j$ of rate $\mu_{ij}$. Thus, $\cS_i$ follows a \emph{hyperexponential distribution}~\cite{gupta1964queues} with mean equal to 
\begin{align}\label{eqn:s_i}
s_i:=\E[\cS_i]=\sum_{j \in \cN_i} (x_{ij}  \lam_j)/(\lam_i  \mu_{ij}).
\end{align}

\xhdr{Workload of worker $i$, denoted by $\rho_i$}. By definition, 
\begin{align}\label{eqn:rho_i}
\rho_i:=\lam_i \cdot \E[\cS_i]=\sum_{j \in \cN_i} (x_{ij}  \lam_j)/ \mu_{ij},
\end{align}
where $\rho_i$ can be re-interpreted as the probability that the worker $i$ is busy or the
proportion of time the worker $i$ is busy averaged over a long period. Note that $\rho_i<1$ is the key condition ensuring the virtual queue $\cQ_i$ can enter a stable state. This is also a condition we should impose on every worker $i \in I$ when designing policy $\pi(\x)$ since otherwise, $i$ could always stay occupied in the long run (thus, not acceptable to $i$) and every task $j$ assigned to $i$ could risk an infinitely long waiting time (not acceptable to $j$).

\xhdr{Waiting time on worker $i$, denoted by $W_i$}. By the analysis above, we see that the queue $\cQ_i$ on worker $i$ qualifies as an $M/G/1$ (using the standard Kendall’s notation \cite{kendall1953stochastic}), which means it admits a Poisson arrival process, a general service time distribution, and a single worker. By the Pollaczek-Khinchin mean formula~\cite{asmussen2003random}, 
\begin{align}\label{eqn:w_i}
w_i:=\E[W_i]=\frac{\lam_i \E[\cS^2_i]}{2(1-\rho_i)}=\frac{\sum_{j \in \cN_i} x_{ij} \lam_j/\mu^2_{ij}}{1-\sum_{j \in \cN_i} x_{ij} \lam_j/\mu_{ij}},
\end{align}
where the numerator is equal to
\begin{align*}
&\lam_i \cdot \E[\cS_i^2]=\lam_i \cdot  \E\bb{\bp{\sum_{j \in \cN_i} \chi_{ij}\cdot\Exp(\mu_{ij})}^2}\\
&=\lam_i \cdot \E\bb{\sum_{j \in \cN_i} \chi_{ij} \cdot\Exp^2(\mu_{ij})  }
=\lam_i \cdot \sum_{j \in \cN_i} \E \bb{ \chi_{ij} \cdot\Exp^2(\mu_{ij})}\\
&= \lam_i \cdot \sum_{j \in \cN_i}  (x_{ij} \lam_j/\lam_i) \cdot (2/\mu^2_{ij})
= 2\sum_{j \in \cN_i}  (x_{ij} \lam_j/\mu^2_{ij}).
\end{align*}

\xhdr{Absolute and relative waiting time of $j$, denoted by $W_j$ and $\obar{W}_j$}. Recall that under $\pi(\x)$, a task $j$ will be assigned to a feasible worker $i \in \cN_j$ with probability $x_{ij}$. Thus, the expected (absolute) waiting time of $j$ should be
\begin{align}\label{eqn:w_j}
w_j:=\E[W_j]=\sum_{i \in \cN_j} x_{ij} \cdot w_i,
\end{align}
where $w_i$ is the expected waiting time on queue $\cQ_i$, as shown in~\eqref{eqn:w_i}. The \emph{relative} waiting time of $j$ on $\cQ_i$ is defined as the ratio of waiting time on $\cQ_i$ to the service time of $i$ for $j$, which has a mean of $1/\mu_{ij}$. Thus, the expected relative waiting time of $j$ should be
\begin{align}
\bw_j&:=\E[\bW_j]=\sum_{i \in \cN_j} x_{ij} \cdot w_i/(1/\mu_{ij}) \nonumber\\
&=\sum_{i \in \cN_j} x_{ij} \cdot \mu_{ij} \cdot \frac{\sum_{\ell \sim i} x_{i \ell} \lam_\ell/\mu_{i \ell}^2}{1-\sum_{\ell \sim i} x_{i \ell} \lam_\ell/\mu_{i \ell}}
\label{eqn:bw_j}.
\end{align}

\subsection{Two Fairness-Related Objectives}
In this paper, we propose the following two fairness metrics and objectives when optimizing a policy $\pi(\x)$.

\paragraph{\obja: Fairness promotion among tasks, denoted by ${\min \max_{j \in J} \bw_j}$}. 
We quantify the overall fairness among users achieved by policy $\pi(\x)$ as the maximum expected \emph{relative} waiting time among all task types, \ie $\max_{j \in J} \bw_j$.  A formula for calculating the relative waiting time is shown in~\eqref{eqn:bw_j}.
Note that here we choose the relative version instead of the absolute one (\ie $\max_{j \in J} w_j$) following, for example, the  paper~\cite{maister1984psychology} that asserts that ``the more valuable the service, the longer the customer will wait." A compelling example is that:``Special checkout counters were originally provided because customers with only a few items felt resentful at having to wait a long time for what was seen as a simple transaction. Customers with a full cart of groceries were much more inclined to tolerate lines."


\xhdr{\objb: Fairness promotion among workers, denoted by $\min \max_{i \in I} \rho_i$}. Recall that for each worker $i \in I$, the workload $\rho_i \in (0,1)$, as defined in~\eqref{eqn:rho_i}, captures the percentage of busy time on worker $i$. Thus, the maximum workload, \,  i.e., $\max_{i \in I} \rho_i$,  reflects the highest degree of being occupied among all workers under policy $\pi(\x)$. By opting for minimization of the maximum workload, denoted by 
$\min \max_{i \in I} \rho_i$, we aim to minimize the occupation time of the most occupied worker as substantially as feasible.

\subsection{Two Optimization Programs}
Consider an allocation policy $\pi(\x)$ parameterized by $\x=(x_{ij})$, where $x_{ij}$ with $(ij) \in E$ denotes the percentage of task of type $j$ assigned to worker $i$. For ease of notation, we will use $i \sim j$ (and $j \sim i$) to represent $i \in \cN_j$ (and $j \in \cN_i$)   throughout this paper. We formulate \obja and \objb as minmax programs as follows. 
\begin{alignat}{2}
(\pra)\min &\max_{j \in J} \bp{\bw_j=\sum_{i \sim j} x_{ij} \cdot \mu_{ij} \cdot \frac{\sum_{\ell \sim i} x_{i \ell} \lam_\ell/\mu_{i \ell}^2}{1-\sum_{\ell \sim i} x_{i \ell} \lam_\ell/\mu_{i \ell}}},  \label{obj_a} \\
 & x_j:=\sum_{i \sim j} x_{ij} =1,  ~~\forall j \in J \label{cons:j} \\ 
 &  \rho_i=  \sum_{\ell \sim i} x_{i \ell} \lam_\ell/\mu_{i\ell} \le 1, ~~\forall i \in I \label{cons:i} \\ 
  &0 \le  x_{ij} \le 1,  ~~\forall (ij)\in E. \label{cons:e}
\end{alignat}
\begingroup
\allowdisplaybreaks
\begin{alignat}{2}
(\prb)~~\min &~~ \max_i \rho_i,&&  \label{obj_b} \\
 & x_j:=\sum_{i \sim j} x_{ij} =1,  &&~~\forall j \in J \label{cons:jb} \\ 
 &  \rho_i=  \sum_{\ell \sim i} x_{i \ell} \lam_\ell/\mu_{i\ell} \le 1,&&~~ \forall i \in I \label{cons:ib} \\ 
  &0 \le  x_{ij} \le 1, && ~~\forall (ij)\in E. \label{cons:eb}
\end{alignat}
\endgroup

We refer to the above programs as \pra and \prb, respectively. Let  $\x_t^*$ and $\x^*_s$ be optimal solutions to \pra and \prb, respectively. 
 \begin{lemma}\label{lem:lp_a}
$\pi(\x_t^*)$ and $\pi(\x_s^*)$ are optimal policies  under \obja and \objb, respectively. 
\end{lemma}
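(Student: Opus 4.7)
The plan is to show that programs \pra and \prb exactly encode the respective minimax problems \obja and \objb over the class of randomized policies $\pi(\x)$, so that their LP optima translate directly into optimal policies. The argument will proceed in three steps. First, I would verify that the feasible region of each program coincides with the collection of vectors $\x$ that correspond to \emph{valid} policies, namely those that incur no rejection and maintain stable queues on every worker. Constraint \eqref{cons:j} encodes the no-rejection requirement, since $\sum_{i \sim j} x_{ij} = 1$ means every arriving task of type $j$ is assigned to some feasible worker with probability one. Constraint \eqref{cons:i} enforces $\rho_i \le 1$, which is the fundamental stability condition for the $M/G/1$ queue $\cQ_i$ discussed around~\eqref{eqn:rho_i}. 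The box constraint \eqref{cons:e} just ensures that $x_{ij}$ is a valid assignment probability.

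Next I would argue that, for any $\x$ satisfying the above constraints, the LP objective agrees with the actual performance measure realized by $\pi(\x)$. For \prb this is immediate from \eqref{eqn:rho_i}, since the maximum over $i$ of $\rho_i$ is literally the worker-fairness metric underlying \objb. For \pra I would chain \eqref{eqn:w_i} (which comes from the Pollaczek--Khinchin mean formula applied to the hyperexponential service time $\cS_i$) with the definitions \eqref{eqn:w_j}--\eqref{eqn:bw_j} of the expected relative waiting time of a task; substituting one into the other produces exactly the expression appearing inside the $\max$ of \eqref{obj_a}. Thus every feasible $\x$ gives rise to a policy whose fairness metric equals the LP objective value at $\x$, and every valid policy $\pi(\x)$ corresponds to such an $\x$.

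Combining these two observations, minimizing the LP objective over its feasible region is equivalent to minimizing the corresponding fairness metric over all valid randomized policies $\pi(\x)$. Hence $\pi(\x_t^*)$ and $\pi(\x_s^*)$ attain the optima of \obja and \objb respectively, which is the claim. The main subtlety I expect concerns the boundary case $\rho_i = 1$, where the Pollaczek--Khinchin expression \eqref{eqn:w_i} diverges: I would remark briefly that any optimal $\x_t^*$ with finite objective must satisfy $\rho_i < 1$ strictly for every $i$ participating in the $\max$ in \eqref{obj_a}, so the relaxation $\rho_i \le 1$ in \eqref{cons:i} causes no issue for \pra; for \prb the boundary is harmless because the workload there simply attains its worst value $1$.
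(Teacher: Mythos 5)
Your proposal is correct and takes essentially the same route as the paper's own proof: the paper likewise argues that the objective in each program is exactly the fairness metric of $\pi(\x)$ and that constraints \eqref{cons:j}--\eqref{cons:e} characterize precisely the viable no-rejection policies. Your treatment is somewhat more detailed (the explicit chaining of \eqref{eqn:w_i} through \eqref{eqn:bw_j} and the remark on the $\rho_i=1$ boundary), but the underlying argument is the same.
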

 \begin{proof}
We focus on showcasing the case of $\obja$ and the program $\pra$. The proof for the other case is similar.  Note that the term shown in~\eqref{obj_a} captures the precise objective we aim to optimize. To prove our claim, we need to demonstrate that all constraints in~$\pra$ hold true for any viable policy of $\pi(\x)$. Constraint~\eqref{cons:j} is reasonable because every policy must assign each incoming task to a feasible worker without rejection, thereby ensuring that the total percentages assigned for each type sum up to one. Constraint~\eqref{cons:i} is valid as the workload of any worker (i.e., the percentage of busy time) should not exceed one. Constraint~\eqref{cons:e} holds true since $x_{ij}$ represents the percentage of tasks of type $j$ assigned to worker $i$.
   \end{proof}

Lemma~\ref{lem:lp_a} suggests that the optimal policies for $\obja$ and $\objb$ each can be obtained by solving minmax programs represented by $\pra$ and $\prb$ respectively. Note  that $\prb$ can be reformulated as a linear program (\LP) by introducing an auxiliary variable $\rho$ and modifying the objective as $\min \rho$, along with additional constraints $\rho \ge \rho_i$ for all $i \in I$. Consequently, we can efficiently solve $\prb$ and obtain an optimal policy for $\objb$. However, for 
program $\pra$, the objective is non-linear and can be neither convex nor concave even under very special settings, posing a technical challenge for direct optimization; 
\ifarxiv
see detailed discussions in the Appendix. 
\else
see detailed discussions in the full version of the paper~\cite{trabelsi2024full}. 
\fi

Nevertheless, under certain conditions, $\pra$ can be effectively and accurately approximated by $\prb$, as proven in Theorem~\ref{thm:main}.



\begin{lemma}
\label{lem:lam}
The optimal values of $\pra$ and $\prb$ each remain invariant if we treat any task type $j\in J$ with an arrival rate of $\lam_j$ as $k$ different online types, each having the same set of neighbors as $j$, with an arrival rate of $\lam_j/k$ for any integer $k$.
\end{lemma}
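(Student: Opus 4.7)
\textbf{Proof plan for Lemma~\ref{lem:lam}.} Fix the type $j$ to be split. Denote by $\pra^k$ and $\prb^k$ the programs obtained after replacing $j$ with $k$ copies $j^{(1)},\ldots,j^{(k)}$, each with arrival rate $\lam_j/k$, the same neighbor set $\cN_j$, and $\mu_{ij^{(t)}}=\mu_{ij}$ for every $i\sim j$. My goal is to show the optimal values of $\pra$ and $\pra^k$ agree, and similarly for $\prb$ and $\prb^k$. I will do this by exhibiting two value-preserving maps between feasible solutions.

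The easy direction (OPT of the split program is $\le$ OPT of the original) is by duplication: given a feasible $\x$ for $\pra$ (resp.\ $\prb$), set $x_{ij^{(t)}}:=x_{ij}$ for every $t\in[k]$ and every $i\sim j$. Constraint~\eqref{cons:j} gives $\sum_{i\sim j^{(t)}} x_{ij^{(t)}}=\sum_{i\sim j} x_{ij}=1$. The contribution of the split copies to $\rho_i$ is $\sum_{t}x_{ij^{(t)}}(\lam_j/k)/\mu_{ij}=x_{ij}\lam_j/\mu_{ij}$, matching the original, so~\eqref{cons:ib}/\eqref{cons:i} still hold and $\rho_i$ is unchanged. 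For $\pra^k$, since each $\bw_{j^{(t)}}$ is an affine function of the $x_{\cdot,j^{(t)}}$'s with coefficients depending only on the $\rho_i$'s and on $\sum_{\ell\sim i} x_{i\ell}\lam_\ell/\mu_{i\ell}^2$, every $\bw_{j^{(t)}}$ equals the original $\bw_j$, and $\bw_{j'}$ for $j'\neq j$ is unchanged.

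The reverse direction is by averaging. Given a feasible $\y=(y_{ij^{(t)}})$ for $\pra^k$ (resp.\ $\prb^k$), set $x_{ij}:=\frac{1}{k}\sum_{t=1}^k y_{ij^{(t)}}$ and $x_{ij'}:=y_{ij'}$ for $j'\neq j$. Then $\sum_{i\sim j} x_{ij}=\frac{1}{k}\sum_t \sum_{i\sim j^{(t)}} y_{ij^{(t)}}=1$, so~\eqref{cons:j} holds. The key identity is that for every $i\in I$,
\begin{align*}
\sum_{t} y_{ij^{(t)}}(\lam_j/k)/\mu_{ij}^r \;=\; x_{ij}\lam_j/\mu_{ij}^r, \qquad r\in\{1,2\},
\end{align*}
so both $\rho_i$ and $\sum_{\ell\sim i}x_{i\ell}\lam_\ell/\mu_{i\ell}^2$ are invariant under averaging. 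For $\prb$ this immediately gives the same objective value. For $\pra$, write $\bw_{j^{(t)}}=\sum_{i\sim j}y_{ij^{(t)}}\cdot A_i$, where $A_i:=\mu_{ij}\cdot\frac{\sum_{\ell\sim i} y_{i\ell}\lam_\ell/\mu_{i\ell}^2}{1-\rho_i}$ is the \emph{same} coefficient in both the split and averaged solutions by the invariance just noted. Consequently $\bw_j^{\text{avg}}=\sum_{i\sim j}x_{ij}A_i=\frac{1}{k}\sum_t \bw_{j^{(t)}}\le \max_t \bw_{j^{(t)}}$, and all $\bw_{j'}$ for $j'\neq j$ are unchanged. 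Hence $\max_{j'} \bw_{j'}^{\text{avg}}\le \max_{j'}\bw_{j'}^{\text{split}}$, proving OPT of $\pra$ is $\le$ OPT of $\pra^k$.

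The main thing to watch is purely bookkeeping: making sure the two nonlinear aggregates that enter $\bw_j$ — namely $\rho_i$ and $\sum_{\ell\sim i} x_{i\ell}\lam_\ell/\mu_{i\ell}^2$ — are each linear in the $x_{i\cdot}$'s with coefficients that do not depend on which copy of $j$ we look at (this is where $\mu_{ij^{(t)}}=\mu_{ij}$ is used). Once that is observed, the averaging argument reduces the apparently nonlinear objective to a convexity-in-disguise statement (average $\le$ max), which closes the loop.
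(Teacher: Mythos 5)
Your proposal is correct and follows essentially the same route as the paper's proof: duplicate the solution across the $k$ copies for one direction, average the copies for the other, observe that $\rho_i$ and $\sum_{\ell\sim i}x_{i\ell}\lam_\ell/\mu_{i\ell}^2$ are invariant under both maps, and finish with the average-$\le$-max step for the $\bw$ values (the paper writes this out for $k=2$ and notes it generalizes, whereas you handle general $k$ directly — an immaterial difference).
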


The above lemma suggests that for fairness maximization among either workers under metric $\objb$ or tasks under metric $\obja$, we can assume without loss of generality that all tasks take a uniform arrival rate by creating an appropriate number of copies for each task type. In other words, the variation among tasks' arrival rates makes no difference to fairness promotion, compared with the difference among service times. \emph{\tbf{In the remaining  sections, we assume without loss of generality that $\lam_j=\lam$ for all $j \in J$.}}

\section{The Relation Between the Two Fairness Optimization Problems}\label{sec:cont}

Consider a general setting denoted by $\bmu:=(\mu_{ij})$, where $\mu_{ij}$ with $(ij) \in E$ represents the parameter for the exponential distribution of the service time taken by worker $i$ to serve task $j$. Let $\eta_t(\bmu, \x)$ denote the objective value of $\pra(\bmu)$ with respect to the input $\bmu$ and a feasible solution $\x=(x_{ij})$. Similarly, $\eta_s(\bmu,\x)$ denotes the objective value of $\prb(\bmu)$. When the context is clear, we may omit either the first or second argument for $\eta_t$ and $\eta_s$. For any given input $\bmu$, let $\eta^*_t(\bmu)$ and $\eta_s^*(\bmu)$ denote the optimal values of $\pra(\bmu)$ and $\prb(\bmu)$ respectively.

\begin{theorem}\label{thm:main}
Let $\x^*_s$ be an optimal solution to $\prb(\bmu)$. We have 
\begin{align}
\eta_t(\bmu, \x_s^*) \le \kap^3 \bP{1+\bp{1-\frac{1}{\kap}} \cdot \frac{\eta^*_s(\bmu)}{1-\eta^*_s(\bmu)} } \cdot \eta_t^*(\bmu), \label{ineq:thm-main}
\end{align}
where $\kap=\max_{i \in I} \sbp{\max_{j \sim i,j'\sim i}\mu_{ij}/\mu_{i,j'}} \ge 1$, which captures the maximum pairwise ratio among the expectations of all service time on each given worker. 
\end{theorem}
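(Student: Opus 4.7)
The plan is to sandwich each relative waiting time $\bw_j$ between two multiples of a workload-only quantity, then separately bound the numerator $\eta_t(\bmu, \x_s^*)$ from above and the denominator $\eta_t^*(\bmu)$ from below. For Step~1, set $f(\rho) := \rho/(1-\rho)$. Starting from the Pollaczek--Khinchine form $w_i = \nu_i/(1-\rho_i)$ with $\nu_i := \sum_{\ell \sim i} x_{i\ell}\lam/\mu_{i\ell}^2$, I would rewrite $\nu_i/\rho_i$ as the weighted average of $\{1/\mu_{i\ell}\}$ with weights $y_{i\ell}:=x_{i\ell}\lam/\mu_{i\ell}$, which therefore lies in $[1/\mu_i^{\max}, 1/\mu_i^{\min}]$. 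Combining this with $\mu_{ij} \in [\mu_i^{\max}/\kap,\, \mu_i^{\max}]$ yields the edge-wise bound $f(\rho_i)/\kap \le \mu_{ij}\, w_i \le \kap\, f(\rho_i)$, which lifts via $\bw_j = \sum_{i \sim j} x_{ij}\mu_{ij} w_i$ to
\begin{equation*}
\frac{1}{\kap} \sum_{i \sim j} x_{ij} f(\rho_i) \;\le\; \bw_j(\x) \;\le\; \kap \sum_{i \sim j} x_{ij} f(\rho_i) \qquad \forall\, j \in J,\ \forall\, \x \text{ feasible}.
\end{equation*}

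For Step~2 (upper bound at $\x_s^*$), every $\rho_i(\x_s^*) \le \rho^* := \eta_s^*(\bmu)$, so $f(\rho_i) \le f(\rho^*)$; combined with $\sum_{i \sim j} x_{ij}^s = 1$, the right-hand sandwich gives $\eta_t(\bmu, \x_s^*) \le \kap\, f(\rho^*) = \kap\,\rho^*/(1-\rho^*)$. For Step~3 (lower bound on $\eta_t^*(\bmu)$), feasibility of $\x_t^*$ for $\prb$ forces $R := \max_i \rho_i(\x_t^*) \ge \rho^*$; fix $i^*$ attaining $R$. The left-hand sandwich, restricted to the $i=i^*$ term, gives $\bw_j(\x_t^*) \ge (x_{i^*j}^t/\kap)\, f(R) \ge (x_{i^*j}^t/\kap)\, f(\rho^*)$ for every $j \sim i^*$. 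To promote this per-$j$ bound to one on $\max_j \bw_j$, I would exploit the capacity identity $R = \lam \sum_{j \sim i^*} x_{i^*j}^t/\mu_{i^*j}$ together with the pinching $\mu_{i^*j} \in [\mu_{i^*}^{\max}/\kap,\, \mu_{i^*}^{\max}]$ to locate a neighbor $j^\dagger$ whose $x_{i^*j^\dagger}^t$ is large enough to conclude
\begin{equation*}
\eta_t^*(\bmu) \;\ge\; \frac{\rho^*}{\kap(\kap-\rho^*)}.
\end{equation*}

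Step~4 then combines the two bounds: $\eta_t(\bmu, \x_s^*)/\eta_t^*(\bmu) \le \kap^2(\kap-\rho^*)/(1-\rho^*)$, which elementary algebra rewrites as $\kap^3\bigl(1+(1-1/\kap)\cdot \rho^*/(1-\rho^*)\bigr)$, matching~\eqref{ineq:thm-main}. The main obstacle is Step~3: the sandwich of Step~1 is routine once the pinching on the $\mu$'s is in place, and Step~2 is an immediate consequence of $\rho_i \le \rho^*$ at $\x_s^*$. The delicate task is extracting a single neighbor $j^\dagger$ of $i^*$ carrying enough of $\x_{i^*\cdot}^t$ mass so that $\bw_{j^\dagger}(\x_t^*)$ reaches the required magnitude; this concentration must be driven purely by the $\kap$-spread of service rates at $i^*$, since otherwise the bound would inevitably involve $|\cN_{i^*}|$ or the ambient $\lam, \mu$ scales. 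The three factors of $\kap$ in the final estimate trace cleanly to the two sides of the Step~1 sandwich together with the concentration step at the bottleneck worker.
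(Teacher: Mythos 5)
Your skeleton matches the paper's proof at the level of the two bounds you isolate: the upper bound $\eta_t(\bmu,\x_s^*)\le \kap\,\rho^*/(1-\rho^*)$ (your Steps 1--2) is exactly how the paper starts, the target lower bound $\eta_t^*(\bmu)\ge \rho^*/\bp{\kap(\kap-\rho^*)}$ is exactly what the paper establishes, and your Step 4 algebra is the same. The genuine gap is Step 3, and it is the crux, not a detail. The concentration claim you need --- a neighbor $j^\dagger$ of the most-loaded worker $i^*$ carrying $x^t_{i^*j^\dagger}\ge (1-\rho^*)/(\kap-\rho^*)$ --- is false in general; note that at $\kap=1$ it demands an edge with $x^t_{i^*j^\dagger}=1$. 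Counterexample with $\kap=1$: one shared worker $i_0$ adjacent to all tasks $1,\dots,n$, each task $j$ also adjacent to its own dedicated worker $i_j$, all rates $\mu$, all arrival rates $\lam$, $\phi=\lam/\mu$. Here $\rho^*=n\phi/(n+1)$, and the optimum of $\pra$ splits every task ($x_{i_0j}=1/(n+1)$, $x_{i_jj}=n/(n+1)$), making every worker's load exactly $\rho^*$ while no edge carries mass $1$; moreover one can check no optimal solution of $\pra$ has a saturated edge, since forcing $x_{ij}=1$ anywhere pushes some $\bw_j$ up to $\phi/(1-\phi)>\rho^*/(1-\rho^*)$. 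Your single-term bound at any worker then yields only $\max_j\bw_j\ge \frac{n}{n+1}\cdot\rho^*/(1-\rho^*)$, strictly below the required $\rho^*/(1-\rho^*)$; since at $\kap=1$ the theorem tolerates no loss at all, any such slack already breaks the statement. The underlying reason is structural: a heavily loaded worker may spread its load thinly over tasks that are mostly served elsewhere, so the needed contribution to $\max_j\bw_j$ cannot be extracted from one worker --- it must be aggregated over all workers serving a bottleneck set of tasks.

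This is precisely what the paper does instead. It first uniformizes each worker's rates to $\bamu_{ij}=\max_{j'\sim i}\mu_{ij'}$, paying one factor $\kap$ on $\eta_t^*$ and one on $\eta_s^*$ (Lemma~\ref{lem:19-a}), and then proves the exact $\kap=1$ identity $\eta_t^*=\eta_s^*/(1-\eta_s^*)$ globally: Claim~\ref{claim:main-3} characterizes $\eta_s^*$ as the bottleneck density $\max_{\emptyset\neq S\subseteq J}\lam(S)/\mu(S)$ via a perturbation (min-cut--type) argument, and Claim~\ref{claim:>=} lower-bounds $\eta_t^*$ by averaging $\bw_j+1$ over all $j$ in a tight set $S$ and all workers in $\pa(S)$, using convexity of $t\mapsto 1/(1-t)$ to show the average is minimized when the loads across $\pa(S)$ equalize at $\lam(S)/\mu(S)$. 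If you want to keep your framework, replace the single worker $i^*$ in Step 3 by such a tight set $S$ (obtainable from Claim~\ref{claim:main-3} or LP duality for $\prb$) and average over $S$; with that substitution your Steps 1, 2 and 4 go through and reproduce the bound \eqref{ineq:thm-main}.
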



For a private case of Theorem~\ref{thm:main}- where $\kappa=1$ we prove that $\eta_t(\bmu, \x_s^*)=\eta_t^*(\bmu)$ (Theorem~\ref{thm:main-2}).

\ifarxiv
These results serve as the bedrock of the whole proof for Theorem~\ref{thm:main}, which is deferred to the Appendix for space reasons. 
\else
These results serve as the bedrock of the whole proof for Theorem~\ref{thm:main}(See ~\cite{trabelsi2024full} for the whole proof of Theorem~\ref{thm:main}).
\fi
Toward the proof of $\kappa=1$, we first define the following minimax programs and show  their equivalence to $\pra$ and $\prb$, respectively, for $\kappa=1$.

\begingroup
\allowdisplaybreaks
\begin{alignat}{2}
(\obar{\pra})~~\min &~~\max_{j \in J} \bp{\bw_j=\sum_{i \sim j} \frac{x_{ij}}{1-\rho_i}-1} ,&&  \label{obj_t2} \\
 & \sum_{i \sim j} x_{ij} =1,  &&~~\forall j \in J \label{t2cons:j}\\ 
 &  \rho_i=(\lam/\mu_i) \sum_{j \sim i} x_{ij} \le 1,&&~~ \forall i \in I \label{rho1}\\ 
  &0 \le  x_{ij} \le 1, && ~~(ij)\in E. \label{t2cons:e} 
\end{alignat}
\endgroup
\begin{alignat}{2}
(\obar{\prb})~~\min &~~\max_{i \in I} \rho_i ,&&  \label{obj_s2} \\
 & \sum_{i \sim j} x_{ij} =1,  &&~~\forall j \in J  \\ 
 &  \rho_i= (\lam/\mu_i)\sum_{j \sim i} x_{ij} \le 1,&&~~ \forall i \in I  \label{rho2} \\ 
  &0 \le  x_{ij} \le 1, && ~~(ij)\in E.  
  \end{alignat}

\begin{lemma}
For $\kappa=1$, the programs $\pra$ and $\obar{\pra}$ are equivalent and the programs $\prb$ and $\obar{\prb}$ are also equivalent.
\end{lemma}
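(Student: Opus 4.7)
The plan is to show equivalence by direct substitution: the hypothesis $\kappa=1$ forces $\mu_{ij}=\mu_i$ to depend only on the worker $i$ (and not on the task $j\sim i$), and combined with the WLOG assumption $\lambda_j=\lambda$ from Lemma~\ref{lem:lam}, the constraints and objectives of $\pra,\prb$ collapse to those of $\obar{\pra},\obar{\prb}$.

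First I would handle the constraint set, which is common to both pairs. Under $\mu_{ij}=\mu_i$ and $\lambda_\ell=\lambda$, the workload constraint \eqref{cons:i}/\eqref{cons:ib} becomes
\[
\rho_i \;=\; \sum_{\ell\sim i} \frac{x_{i\ell}\lambda_\ell}{\mu_{i\ell}} \;=\; \frac{\lambda}{\mu_i}\sum_{\ell\sim i} x_{i\ell} \;\le\; 1,
\]
matching \eqref{rho1} and \eqref{rho2}. The per-task constraint $\sum_{i\sim j} x_{ij}=1$ and the box constraints $0\le x_{ij}\le 1$ are identical in all four programs. So the feasible regions coincide.

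Next I would handle the objectives. For $\prb$ vs.\ $\obar{\prb}$ the objective $\min\max_i \rho_i$ is literally the same, so equivalence is immediate once the feasible regions agree. For $\pra$ vs.\ $\obar{\pra}$, I would substitute $\mu_{ij}=\mu_i$ and $\lambda_\ell=\lambda$ into \eqref{obj_a}:
\[
\bw_j \;=\; \sum_{i\sim j} x_{ij}\,\mu_i \cdot \frac{(\lambda/\mu_i^2)\sum_{\ell\sim i} x_{i\ell}}{1-(\lambda/\mu_i)\sum_{\ell\sim i} x_{i\ell}} \;=\; \sum_{i\sim j} x_{ij}\cdot \frac{\rho_i}{1-\rho_i}.
\]
Then I would apply the algebraic identity $\frac{\rho_i}{1-\rho_i}=\frac{1}{1-\rho_i}-1$ and invoke the feasibility constraint $\sum_{i\sim j} x_{ij}=1$ to obtain
\[
\bw_j \;=\; \sum_{i\sim j} \frac{x_{ij}}{1-\rho_i} \;-\; \sum_{i\sim j} x_{ij} \;=\; \sum_{i\sim j} \frac{x_{ij}}{1-\rho_i} - 1,
\]
which is exactly the objective \eqref{obj_t2} of $\obar{\pra}$. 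Since the objectives agree pointwise on the common feasible region, the minimax values and optimizers coincide.

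There is no real obstacle here: the argument is a direct calculation, and the only subtlety worth flagging is that the collapse of the $-1$ term relies on using the equality constraint $\sum_{i\sim j} x_{ij}=1$ rather than just $\le 1$, so one should be careful to note that this identity is being applied on the feasible region, not as a standalone algebraic manipulation. I would also briefly remark at the start that Lemma~\ref{lem:lam} justifies the blanket assumption $\lambda_j=\lambda$ used throughout, so this is not an additional hypothesis.
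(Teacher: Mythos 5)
Your proposal is correct and follows essentially the same route as the paper's proof: substitute $\mu_{ij}=\mu_i$ (from $\kappa=1$) and $\lambda_j=\lambda$ (from Lemma~\ref{lem:lam}) into the expressions for $\rho_i$ and $\bw_j$, and use the no-rejection constraint $\sum_{i\sim j}x_{ij}=1$ to collapse $\sum_{i\sim j}x_{ij}\bigl(\tfrac{1}{1-\rho_i}-1\bigr)$ into $\sum_{i\sim j}\tfrac{x_{ij}}{1-\rho_i}-1$. The subtlety you flag about the equality constraint is exactly the point the paper also makes explicit.
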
  

\begin{proof}
Note that $\kap=1$ suggests that $\mu_{ij}$ takes some uniform value of $\mu_{ij}=\mu_i$ for every $j \sim i$. Recall that $\lam_j=\lam$ for all $j \in J$ due to Lemma~\ref{lem:lam}. Under these assumptions, we see that the expressions of $\rho_i$ and $\bw_j$ in~\eqref{eqn:rho_i} and~\eqref{eqn:bw_j} can be simplified as
\begin{align*}
&\rho_i=\sum_{\ell \sim i} x_{i \ell} \lam_\ell/\mu_{i \ell}=(\lam/\mu_i) \cdot \sum_{\ell \sim i} x_{i \ell},\\
&\bw_j:=\E[\bW_j]=\sum_{i \sim j} x_{ij} \cdot w_i/(1/\mu_{ij}) \\
&=\sum_{i \sim j} \frac{x_{ij} \cdot \mu_{i} \cdot \sum_{\ell \sim i} x_{i \ell} \lam_\ell/\mu_{i}^2}{1-\sum_{\ell \sim i} x_{i \ell} \lam_\ell/\mu_{i}}=\sum_{i \sim j}   \frac{x_{ij} \cdot\sum_{\ell \sim i} x_{i \ell} \lam_\ell/\mu_{i}}{1-\sum_{\ell \sim i} x_{i \ell} \lam_\ell/\mu_{i}}\\
&=\sum_{i \sim j} x_{ij}\bp{-1+\frac{1}{1-\rho_i}}=-1+\sum_{i \sim j} \frac{x_{ij}}{1-\rho_i},
\end{align*}
where the equality on the last line is due to $\sum_{i \sim j} x_{ij}=1$ for every $j \in J$ (no rejection allowed). 
Substituting lines \ref{rho1} and \ref{rho2} with the value of $\rho_i$ and line \ref{obj_t2} with the value of $\bw_j$ implies that the programs \pra and \prb are equivalent to \obar{\pra} and \obar{\prb}.
\end{proof}

Consider a given setting with $\bmu=(\mu_{ij})$ satisfying $\mu_{ij}=\mu_i$ for all $j \sim i$ and $\lam_j=\lam$ for all $j \in J$. For ease of notation, we use  
$\obar{\eta^*_t}$ and $\obar{\eta^*_s}$ to denote optimal values of \obar{\pra} 
and \obar{\prb}, respectively, with respect to the given setting. By default, we assume both have feasible solutions.\footnote{Infeasibility to either Program $\obar{\pra}$ or $\obar{\prb}$ suggests that no policy can lead to meaningful fairness among tasks (finite max expected waiting time) or among workers (a non-zero ratio of being free).} 
We denote by $\obar{\eta_t(\x_s^*)}$ the value of $\emph{\obar{\pra}}$ on $\x_s^*$ in the given setting.

It is tempting to prove that $\eta_t(\x_s^*)=\eta_t^*$ for $\kappa=1$ by showing that  \obar{\pra} and  \obar{\prb} each possess an optimal solution such that $\{\rho_i\}$ all take a uniform value, say $\rho$. Following this ``claim'',  \obar{\pra} is then reduced to $\min 1/(1-\rho)-1$ with $\rho=\rho_i$ for all $i \in I$, while \obar{\prb} is reduced to $\min  \rho$ with $\rho=\rho_i$ for all $i \in I$. This establishes Theorem~\ref{thm:main-2} since $\min 1/(1-\rho)-1$ is equivalent to $\min \rho$. 
The example below disproves this idea, unfortunately.

\begin{example}\label{exam}[\tbf{\emph{\obar{\pra}} and \emph{\obar{\prb}} each
possess a unique optimal solution with non-uniform values of $\{\rho_i\}$ and $\{\bw_j\}$}.]
Consider a graph $G=(I,J,E)$ such that $|I|=m=2$ and $|J|=n \gg 1$ 
\ifarxiv
(See Figure~\ref{fig:exam-1}).
\else
(A relevant figure is in \cite{trabelsi2024full}).
\fi
The input setting is as follows. $\mu_{ij}=\mu$ for all $(ij) \in E$ and $\lam_j=\lam$ for all $j \in J$. Let $\phi=\lam/\mu$ with $n \cdot \phi<1$. $i=2$ is connected to all $j \in J$, while $i=1$ is connected only to $j=1$. We can verify that (1) \emph{\obar{\pra}} and \emph{\obar{\prb}} each have a  \emph{unique} optimal solution and the two are the same, which is $\x^*=(x_{ij})$ with $x_{11}=1$, $x_{21}=0$, and $x_{2j}=1$ for all $1<j \le n$; (2) for \emph{\obar{\prb}}: $\rho_1(\x^*)=\phi$, and $\rho_2(\x^*)=(n-1)  \phi<1$; for \emph{\obar{\pra}}: $\bw_1(\x^*)=1/(1-\rho_1(\x^*))-1=1/(1-\phi)-1$ and $\bw_j =1/(1-\rho_2(\x^*))-1=1/(1-(n-1)\phi)-1$ for $1<j \le n$.
\end{example}

We will now present two lemmas that establish together the correctness of Theorem~\ref{thm:main-2}.
\begin{lemma}\label{lem:<=}

$\obar{\eta_t(\x_s^*)} \le  1/(1-\obar{\eta_s^*})-1$ 
\end{lemma}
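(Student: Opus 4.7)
The plan is to exploit directly the simplified formula for $\bw_j$ available under $\kappa=1$, namely $\bw_j = -1 + \sum_{i\sim j} x_{ij}/(1-\rho_i)$, combined with the definition $\obar{\eta^*_s} = \max_{i\in I} \rho_i(\x^*_s)$. The key observation is that $1/(1-\rho_i(\x^*_s))$ is monotone increasing in $\rho_i(\x^*_s)$, so replacing each $\rho_i(\x^*_s)$ by its upper bound $\obar{\eta^*_s}$ can only increase $\bw_j$.

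Concretely, I would proceed as follows. First, fix an arbitrary task type $j \in J$. By the equivalence of $\pra$ and $\obar{\pra}$ under $\kappa=1$, write
\[
\bw_j(\x^*_s) \;=\; -1 + \sum_{i \sim j} \frac{x^*_{s,ij}}{1-\rho_i(\x^*_s)}.
\]
Then, since $\rho_i(\x^*_s) \le \obar{\eta^*_s}$ for every $i \in I$ by definition of $\obar{\eta^*_s}$, and since $\obar{\eta^*_s} < 1$ by feasibility of $\obar{\prb}$, each denominator satisfies $1 - \rho_i(\x^*_s) \ge 1 - \obar{\eta^*_s} > 0$. Substituting this bound termwise and factoring out yields
\[
\bw_j(\x^*_s) \;\le\; -1 + \frac{1}{1-\obar{\eta^*_s}} \sum_{i \sim j} x^*_{s,ij} \;=\; -1 + \frac{1}{1-\obar{\eta^*_s}},
\]
where the final equality uses the no-rejection constraint $\sum_{i\sim j} x^*_{s,ij} = 1$ from $\obar{\prb}$. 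Taking the maximum over $j \in J$ on the left-hand side gives $\obar{\eta_t(\x^*_s)} \le 1/(1-\obar{\eta^*_s}) - 1$, as required.

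There is essentially no obstacle here: the lemma reduces to a termwise monotonicity argument plus the no-rejection constraint, and the only subtlety is checking that $\obar{\eta^*_s} < 1$ so that the bound $1/(1-\obar{\eta^*_s})$ is well defined; this follows from the assumed feasibility of $\obar{\prb}$ (any feasible $\x$ must have $\rho_i < 1$ for every $i$ to make $\bw_j$ finite, so in particular $\obar{\eta^*_s} < 1$). The real work lies in the companion lemma that lower bounds $\obar{\eta^*_t}$ in terms of $\obar{\eta^*_s}$, which is what will combine with this one to yield Theorem~\ref{thm:main-2}.
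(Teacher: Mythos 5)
Your proof is correct and follows essentially the same argument as the paper: bound each $\rho_i(\x_s^*)$ by $\obar{\eta_s^*}=\max_i \rho_i(\x_s^*)$, use monotonicity of $1/(1-\rho)$ termwise, and apply the no-rejection constraint $\sum_{i\sim j}x_{ij}=1$ before taking the maximum over $j$. Your added remark that $\obar{\eta_s^*}<1$ is needed for the bound to be meaningful is a fair point the paper handles only implicitly via its feasibility assumption.
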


\begin{proof}
Since $\x^*=(x_{ij})$ is an optimal solution to \obar{\prb}, $\obar{\eta_s^*}=\max_{i \in I} \rho_i(\x^*):=\rho^*$. Observe that for each $j \in J$,
\[
\bw_j (\x^*)=\sum_{i \sim j} \frac{x_{ij}}{1-\rho_i(\x^*)} \le \sum_{i \sim j} \frac{x_{ij}}{1-\rho^*}=\frac{1}{1-\rho^*}-1,
\]
which suggests that  $\obar{\eta_t(\x^*)}=\max_j \bw_j(\x^*) \le 1/(1-\rho^*)-1=1/(1-\obar{\eta^*_s})-1$. 
\end{proof}
\ifarxiv
\begin{figure}[th!]
\begin{minipage}{.5\linewidth}
 \begin{tikzpicture}
 \draw (0,0) node[minimum size=0.2mm,draw,circle] {$i_1$};
  \draw (0,-1) node[minimum size=0.2mm,draw,circle] {$i_2$};
  
  \draw (3,0) node[minimum size=1mm,draw,circle] {$j_1$};
    \draw (3,-1) node[minimum size=1mm,draw,circle] {$j_2$};

            \draw (3,-3) node[minimum size=1mm,draw,circle] {$j_n$};
\draw[ultra thick, -] (0.4,0)--(2.6,0);
\draw[ultra thick, -] (0.4,-1)--(2.6,-1);

\draw[ultra thick, -] (0.4,-1)--(2.6,0);
\draw[ultra thick, -] (0.4,-1)--(2.6,-1);
\draw[ultra thick, -] (0.4,-1)--(2.6,-3);
\draw[ultra thick, -] (0.4,-1)--(2.6,-2);
\draw [dotted, ultra thick](2.7,-2) -- (3.3,-2);
\end{tikzpicture}
\end{minipage}%
\hfill
\begin{minipage}{.5\linewidth}
\begin{align*}
& \rho_1(\x^*) =\phi,\\
&\rho_2(\x^*)=(n-1)  \phi,\\
& \bw_1(\x^*)=\frac{1}{1-\phi}-1,\\
& \bw_{j}(\x^*)=\frac{1}{1-(n-1)\phi}-1.
\end{align*}
\end{minipage}%
\caption{An example where  {\obar{\pra}} and {\obar{\prb}} each have a unique optimal solution with non-uniform values of $\{\rho_i\}$ and $\{\bw_j\}$, though $\eta_t(\bmu,\x_s^*)=\eta^*_t$ since the programs share the same unique optimal solution.} \label{fig:exam-1}
\end{figure}
\fi
\begin{lemma}\label{lem:>=}
$1/(1-\obar{\eta_s^*})-1 \le \obar{\eta_t^*}$.
\end{lemma}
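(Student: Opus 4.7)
The plan is to prove the per-feasible-$\x$ inequality $\max_j \bw_j(\x) \ge \obar{\eta_s^*}/(1-\obar{\eta_s^*})$, which specialized to $\x = \x_t^*$ gives the lemma. I will split the argument into two stages: first, reduce the statement about the nonlinear quantity $\bw_j$ to one about the bilinear \emph{task-side load} $\tilde\rho_j := \sum_{i \sim j} x_{ij}\rho_i$; then lower-bound $\max_j \tilde\rho_j$ by $\obar{\eta_s^*}$ using LP duality for $\obar{\prb}$ together with a Cauchy--Schwarz step.

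For the first stage I would apply Jensen's inequality to the convex function $t \mapsto 1/(1-t)$ on $[0,1)$ with the distribution $\{x_{ij}\}_{i \sim j}$, getting $\bw_j + 1 \ge 1/(1-\tilde\rho_j)$ for every $j$. Since $t \mapsto t/(1-t)$ is increasing on $[0,1)$, this reduces the goal to showing $\max_j \tilde\rho_j \ge \obar{\eta_s^*}$ for every feasible $\x$. This is the step I expect to be the main obstacle: the naive averaging bound $\sum_j \tilde\rho_j/n \ge \obar{\eta_s^*}$ fails in general (already in Example~\ref{exam} once $n>2$), so the aggregation across $j$ must be weighted to match the LP structure of $\obar{\prb}$ rather than uniform.

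For the second stage I will exploit the LP dual of $\obar{\prb}$: let $(\alpha^*, \beta^*)$ be optimal dual variables, so that $\sum_i \beta_i^* = 1$, $\alpha_j^* \le \beta_i^* \lam/\mu_i$ for each $(ij) \in E$, and $\sum_j \alpha_j^* = \obar{\eta_s^*}$ by strong duality. Set $\sigma_i := \sum_{j \sim i} \alpha_j^* x_{ij}$. Swapping the order of summation gives $\sum_i \sigma_i = \obar{\eta_s^*}$ and $\sum_j \alpha_j^* \tilde\rho_j = \sum_i \sigma_i \rho_i$, while dual feasibility yields $\sigma_i \le \beta_i^*(\lam/\mu_i)\sum_{j \sim i} x_{ij} = \beta_i^* \rho_i$, so $\sigma_i = 0$ whenever $\rho_i = 0$ and $\sigma_i/\rho_i \le \beta_i^*$ on the support, which sums to at most $1$. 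Applying Cauchy--Schwarz with $a_i = \sqrt{\sigma_i \rho_i}$ and $b_i = \sqrt{\sigma_i/\rho_i}$ gives $(\sum_i \sigma_i)^2 \le (\sum_i \sigma_i \rho_i)(\sum_i \sigma_i/\rho_i)$, hence $\sum_i \sigma_i \rho_i \ge (\obar{\eta_s^*})^2$. Dividing by $\obar{\eta_s^*}$ and using $\max_j \tilde\rho_j \ge \sum_j (\alpha_j^*/\obar{\eta_s^*})\tilde\rho_j$ will close the second stage.

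Feeding $\max_j \tilde\rho_j \ge \obar{\eta_s^*}$ back into the first-stage inequality and invoking monotonicity of $t/(1-t)$ then gives $\max_j \bw_j(\x) \ge \obar{\eta_s^*}/(1-\obar{\eta_s^*}) = 1/(1-\obar{\eta_s^*}) - 1$, and taking $\x = \x_t^*$ completes the lemma. The delicate point is choosing the right weighting $\sigma_i$ in the second stage: it is dictated by the dual constraint $\alpha_j^* \le \beta_i^* \lam/\mu_i$, and it is exactly the self-bounding property $\sigma_i/\rho_i \le \beta_i^*$ that allows Cauchy--Schwarz to convert the dual identity $\sum_i \sigma_i = \obar{\eta_s^*}$ into the quadratic bound $\sum_i \sigma_i \rho_i \ge (\obar{\eta_s^*})^2$; the seemingly more natural $\beta^*$-weighted aggregation of $\bw_j$ does not close because $\beta^*$ need not lie in the convex hull of the columns of $\x$.
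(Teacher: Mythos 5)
Your proposal is correct, but it follows a genuinely different route from the paper's. The paper proves the lemma via two claims: Claim~\ref{claim:main-3}, a combinatorial bottleneck characterization $\obar{\eta_s^*}=\max_{\emptyset\neq S\subseteq J}\lam(S)/\mu(S)$, established by a perturbation argument on an optimal solution whose set of saturated workers is minimal, and Claim~\ref{claim:>=}, the bound $\obar{\eta_t^*}+1\ge 1/(1-\lam(S)/\mu(S))$ for every $S$, established by averaging $\bw_j$ over $j\in S$ and solving the auxiliary convex program~\eqref{eqn:claim-1} by local perturbation. You bypass the bottleneck-set machinery entirely: Jensen's inequality reduces $\bw_j$ to the mixed load $\tilde\rho_j=\sum_{i\sim j}x_{ij}\rho_i$, and strong LP duality for $\obar{\prb}$ combined with Cauchy--Schwarz yields $\max_j\tilde\rho_j\ge\obar{\eta_s^*}$ for \emph{every} feasible $\x$, which is slightly stronger than what the lemma needs; your dual-weighted aggregation by $\alpha_j^*$ plays exactly the role of the paper's restriction to the tight set $S$ (the optimal dual is supported on the bottleneck), and your Cauchy--Schwarz step replaces the paper's auxiliary minimization. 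What the paper's route buys is the explicit Hall-type characterization of $\obar{\eta_s^*}$, which has some independent interest, though for the proof of Theorem~\ref{thm:main} only the resulting identity $-1+1/(1-\obar{\eta_s^*})=\obar{\eta_t^*}$ is used, and your argument (together with Lemma~\ref{lem:<=}) delivers that identity as well; what your route buys is brevity and only off-the-shelf tools (duality, Jensen, Cauchy--Schwarz), with no minimal-saturated-set argument. When writing it out, tidy up three small points: take the optimal dual with $\alpha_j^*=\min_{i\sim j}\beta_i^*\lam/\mu_i\ge 0$ so the $\sigma_i$ are nonnegative and the final reweighting by $\alpha_j^*/\obar{\eta_s^*}$ is a genuine convex combination; note that you dualize $\obar{\prb}$ with the caps $\rho_i\le 1$ removed, which has the same optimal value under the paper's standing feasibility assumption; and dispose of the degenerate cases $\obar{\eta_s^*}=0$ and $\obar{\eta_s^*}=1$ (or $\rho_i=1$ on the support of some column) separately, where the claim is trivial or both sides are infinite.
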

\ifarxiv
The lemma's proof is in the Appendix.
\else
The lemma's proof is in the full version of the paper~\cite{trabelsi2024full}. 
\fi 

We're now set to present results for $\kap=1$.

\begin{theorem}
\label{thm:main-2}
Consider an input $\bmu=(\mu_{ij})$ with $\kap=1$. Let  $\x_s^*$ be an optimal solution to $\emph{\obar{\prb}}$.  We have that the value of $\emph{\obar{\pra}}$ on the solution of $\x_s^*$ is equal to its optimal value, \ie $\eta_t(\x_s^*)=\eta_t^*$.
\end{theorem}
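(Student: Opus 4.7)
The plan is to sandwich the value $\obar{\eta_t(\x_s^*)}$ between $\obar{\eta_t^*}$ from below and from above using the two lemmas just proved, then invoke the equivalence of $\pra$ with $\obar{\pra}$ (and $\prb$ with $\obar{\prb}$) established earlier for $\kappa=1$ to translate the barred statement into the statement of the theorem.

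First I would observe that the feasible regions of $\obar{\pra}$ and $\obar{\prb}$ are identical: both require $\sum_{i \sim j} x_{ij} = 1$ for every $j$, $\rho_i = (\lambda/\mu_i)\sum_{j \sim i} x_{ij} \le 1$ for every $i$, and $x_{ij} \in [0,1]$. Since $\x_s^*$ is feasible for $\obar{\prb}$, it is also feasible for $\obar{\pra}$, and hence $\obar{\eta_t^*} \le \obar{\eta_t(\x_s^*)}$ by definition of the optimum.

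Next, I would chain Lemma~\ref{lem:<=} and Lemma~\ref{lem:>=} to obtain the reverse inequality:
\begin{equation*}
\obar{\eta_t(\x_s^*)} \;\le\; \frac{1}{1-\obar{\eta_s^*}}-1 \;\le\; \obar{\eta_t^*}.
\end{equation*}
Combining this with the trivial direction above yields $\obar{\eta_t(\x_s^*)} = \obar{\eta_t^*}$, i.e., the optimal solution to $\obar{\prb}$ is already an optimal solution to $\obar{\pra}$ when $\kappa=1$.

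Finally, since the lemma preceding Example~\ref{exam} shows that under $\kappa=1$ (and the WLOG assumption $\lambda_j=\lambda$ from Lemma~\ref{lem:lam}) the program $\pra$ is equivalent to $\obar{\pra}$ and $\prb$ is equivalent to $\obar{\prb}$, an optimal solution $\x_s^*$ to $\prb$ coincides with an optimal solution to $\obar{\prb}$, and the objective values match. Thus $\eta_t(\x_s^*) = \eta_t^*$, which is the desired conclusion. There is no real obstacle here; the entire content of the argument has been packaged into the two sandwich lemmas and the equivalence of the original and barred programs, so this last step is essentially bookkeeping.
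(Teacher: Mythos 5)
Your proposal is correct and follows essentially the same route as the paper: the lower bound $\obar{\eta_t(\x_s^*)} \ge \obar{\eta_t^*}$ from the shared constraint set, and the upper bound by chaining Lemma~\ref{lem:<=} with Lemma~\ref{lem:>=}. The final translation between the barred and unbarred programs is harmless but unnecessary, since the theorem is already stated for $\obar{\pra}$ and $\obar{\prb}$.
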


\begin{proof}
    The above two lemmas together imply that $\obar{\eta_t(\x_s^*)} \le \obar{\eta_t^*}$. $\x_s^*$ is feasible to $\obar{\pra}$ since $\obar{\pra}$ and $\obar{\prb}$ share the same set of constraints, and thus, $\obar{\eta_t(\x_s^*)} \ge \obar{\eta_t^*}$, which establishes Theorem~\ref{thm:main-2}. 
\end{proof}
\section{Experiments}

\subsection{Algorithms and Heuristics}

This section presents an algorithm derived from solutions to one of the minimax problems. 
We also describe a heuristic based on this algorithm, which gives preference to assigning tasks to available workers, thereby enhancing allocation through the effective workload of free workers.  In addition, this section introduces two real-time greedy heuristics, which function as baseline methodologies.

\paragraph{Minimax problems based algorithm:}
We first describe Algorithm ~\ref{alg:pb}. This algorithm has offline and online phases. 
In the offline phase (line 2), a solution to one of the minimax problems is computed. In the online phase (lines 4-7), when a task arrives, the task is assigned to the queue of a worker according to the probabilities computed by the program in the offline phase. This algorithm has two variants: One solves \pra in the offline phase while the other solves \prb.

\paragraph{Minimax problems based heuristic:}
A notable issue with Algorithm 1 is that tasks can wait for a busy worker despite other available workers. This leads to suboptimal performance. To address this, we create a heuristic based on Algorithm~\ref{alg:pb}. Like Algorithm~\ref{alg:pb}, in Algorithm~\ref{alg:ha}, task assignment probabilities are computed offline to mitigate this problem. In the online phase, incoming tasks are assigned to free workers. If multiple workers are free, their precomputed probabilities (from the offline phase) are normalized to sum to 1. A worker is subsequently chosen randomly, guided by these normalized probabilities.
If there are no free workers, the tasks are assigned according to their probabilities as in Algorithm~\ref{alg:pb}. Algorithm \ref{alg:ha} describes this heuristic. As in Algorithm \ref{alg:pb}, there are two variants of Algorithm \ref{alg:ha}: One solves ~$\pra$ in the offline phase, while the other solves $\prb$.

This method targets reduced waiting times, especially during low-load periods. However, this change might decrease worker workload or waiting times for other tasks, as it deviates from calculated optimal probabilities. In practice, we find that the trade-off for worker and task fairness is reasonable, given the substantial benefits for all tasks' fairness. 

\paragraph{Computational complexity of Algorithms~\ref{alg:pb} and ~\ref{alg:ha}}
Both algorithms~\ref{alg:pb} and ~\ref{alg:ha} have \textbf{offline} and \textbf{online} phases.
The offline phase is identical for both algorithms and requires the solution of \pra or \prb.
Following ~\cite{cohen2021solving},
the runtime for solving the linear program- \prb can be as low as $O^{*}(N^{2+1/6}\log(N/\delta))$,
where $\delta$ is the relative accuracy and $N = |E|$ is the number of edges in the graph $G$. We leave the complexity of solving $\pra$ to future work.
In any case, the complexity of the offline phase dominates the complexity of the online phase.

\begin{algorithm}[th!]
\DontPrintSemicolon
\textbf{Offline Phase}: \;
Solve \pra (\prb) and let $\{x_{ij}\}$ be an optimal solution.\;
\textbf{Online Phase}:\;
\For{each task of type $j$ that arrive on time $t$}
{
Let $\cQ_i$ be a queue of worker $i$\;
Choose randomly a worker $i$ following the probabilities in $\{x_{ij}\}$\;

Update $\cQ_i=\cQ_i \cup \{(j,t)\}$. \;}

\caption{An LP-based algorithm for \obja and \objb.}
\label{alg:pb}
\end{algorithm}

\paragraph{The two greedy heuristics}
Similar to~\cite{viden}, we devised two greedy heuristics as baselines for comparison. The first minimizes maximum task waiting times, and the second minimizes maximum worker workload.
In the first, incoming tasks are assigned to workers with the shortest estimated waiting time, calculated by summing average expected task durations for tasks in the queue. The elapsed time for ongoing tasks is subtracted from their average duration to update estimates.
For the second, tasks are assigned to less utilized workers based on current workload upon task arrival. This approach considers executed tasks, using actual durations rather than expected durations. 
The first heuristic is denoted as GTW (Greedy Task Waiting time) and the second as GWU (Greedy Worker Workload).

\subsection{Experimental Settings}
We ran experiments on the teleoperation domain. As already mentioned in Section~\ref{sec:intro},
the teleoperation of AVs involves intervention tasks that are assigned to the teleoperators who perform them.
We adapted the dataset of \cite{viden} for our two-sided fairness study.
\ifarxiv
More details about the experimental settings and additional experimental results have been moved to the Appendix. 
\else
More details about the experimental settings and additional experimental results are in the full version of the paper~\cite{trabelsi2024full}. 
\fi
Source code and data for running the experiments are available at \cite{two_sided_fairness}.

\begin{figure*}
    \centering
    \includegraphics[width=0.6\columnwidth]{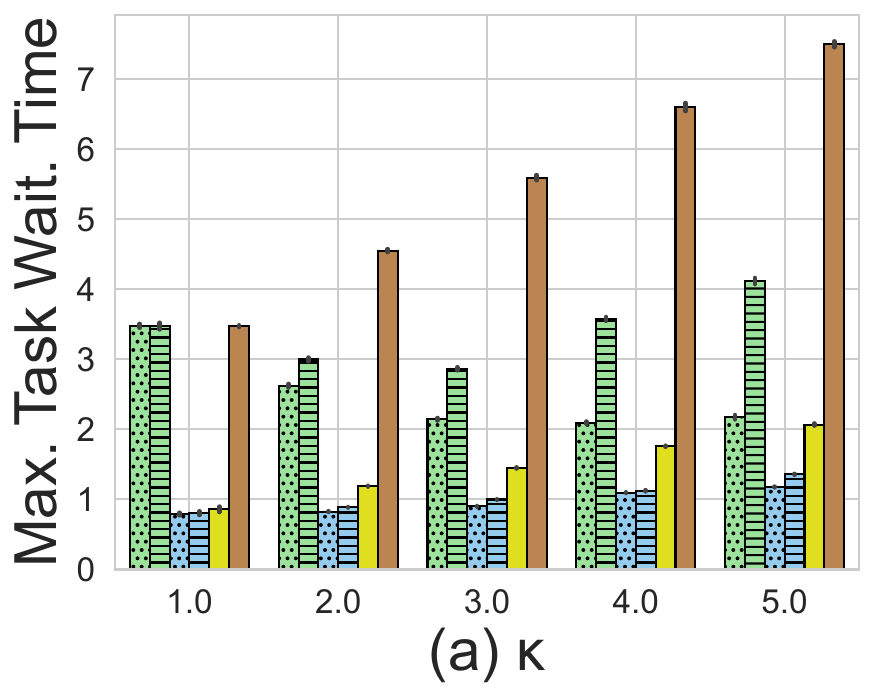}
    \includegraphics[width=0.6\columnwidth]{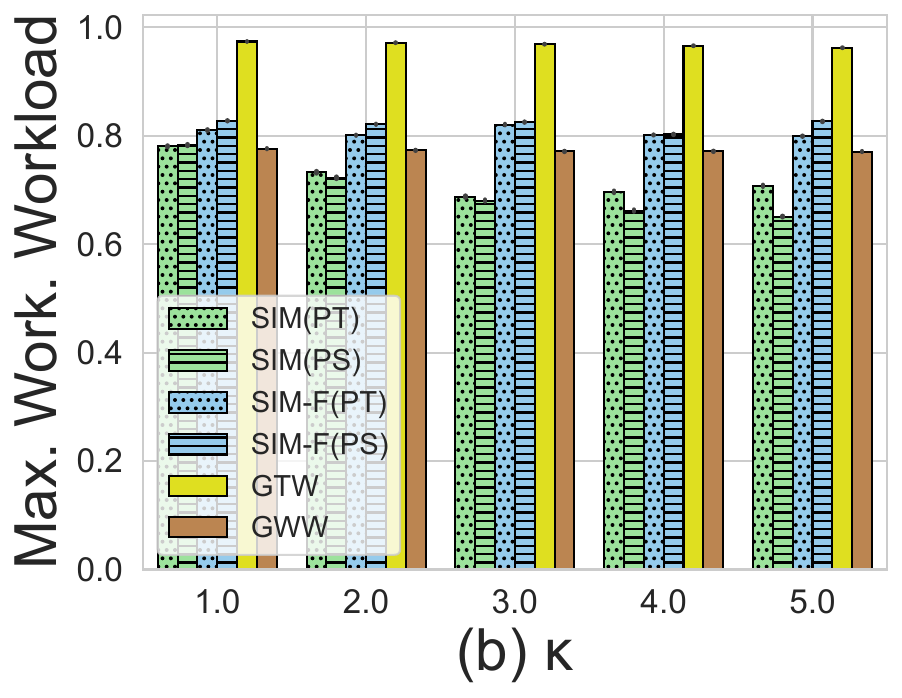}
    \includegraphics[width=0.6\columnwidth]{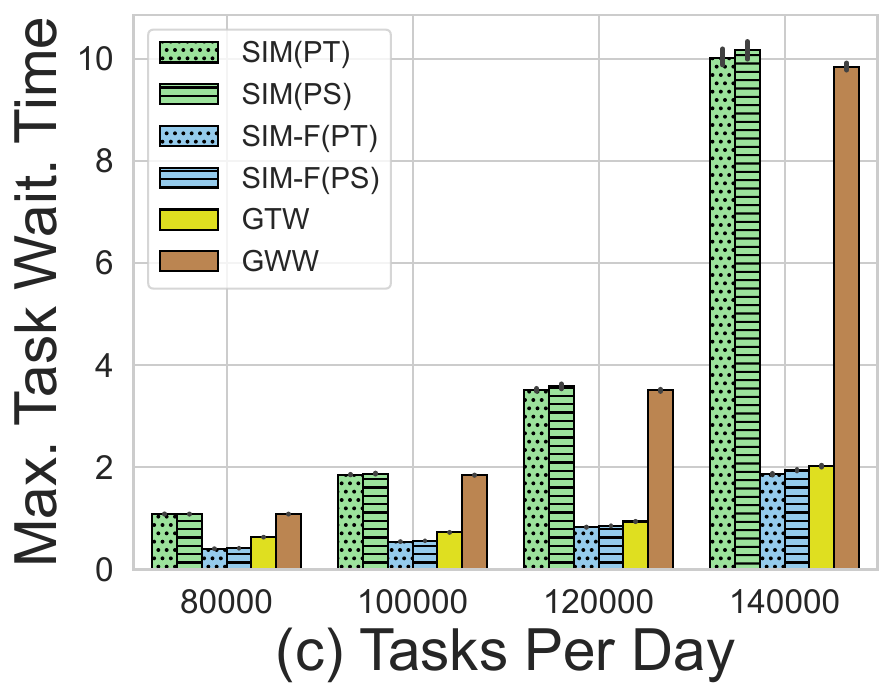}
    \includegraphics[width=0.6\columnwidth]{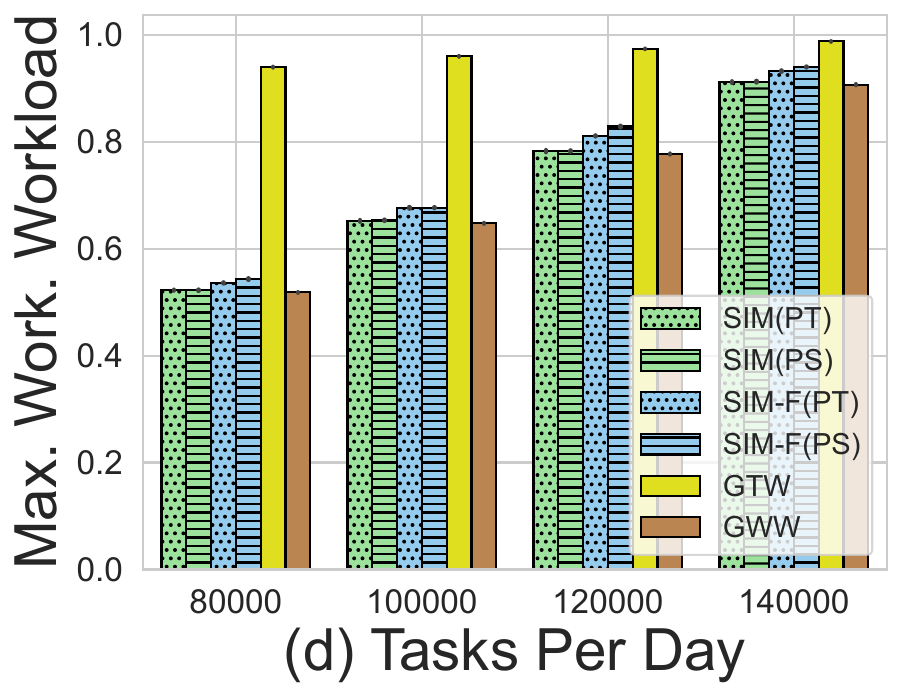}
    \includegraphics[width=0.6\columnwidth]{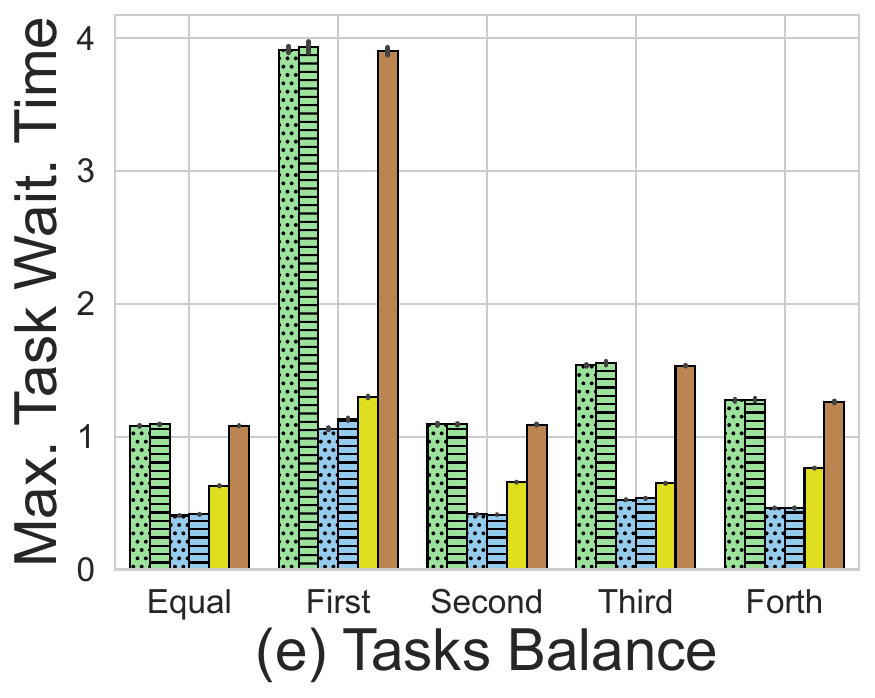}
    \includegraphics[width=0.6\columnwidth]{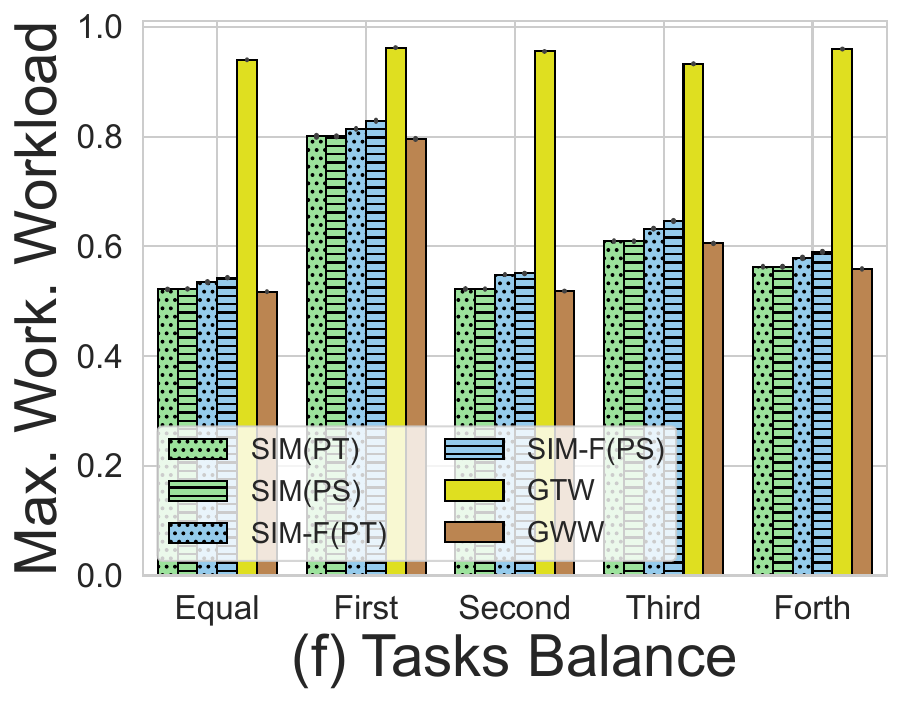}
    \caption{ Y axis is the maximum waiting time(in seconds) for a task(a,c and e) and the max. worker workload(b,d and f). The X axis in (a,b) is the value of $\kappa$(x axis) and the task load is of 120000 tasks per day. In (c,d), the X axis is the task load and $\kappa$ is set to 1. In (e,f) the X axis is for different balances of arrival distribution: first bar is for equal distribution for each task type. In the second bar, the first type has probability of 70\% and the others have 10\%. the other bars are defined similarly for the second, third and forth task types (task load was 80000 tasks per day and $\kappa$ is set to 1).
    In the legend: SIM(PT) and SIM(PS) denote Algorithm~\ref{alg:pb}'s results for \pra and \prb in simulation. SIM-F(PT) and SIM-F(PS) are Algorithm~\ref{alg:ha}'s results (in which we assign to a \textbf{free} worker first) for $\pra$ and $\prb$ in simulation. $GTW$ and $GWU$ are the results of the greedy heuristics targeting task waiting time and worker workload in simulation.
    Error bars represent a confidence interval of 0.95.
    }
    \label{fig:window}
\end{figure*}

\paragraph{The tasks, their durations and their arrival rates}

 Our study built upon the four task types defined by Viden et al.~(\citeyear{viden}). Their dataset provided valuable insights into the average duration times for each teleoperator (worker) and task type in a simulation.
We explored three distinct approaches to define task duration in our experiments. All approaches involved sampling durations from exponential distributions, but the difference lay in the means of these distributions.

\paragraph{The teleoperators and the tasks they can perform}
Using the dataset of Viden et al.~(\citeyear{viden}), we initially had 10 teleoperators (workers) and 4 task types. 
We form a bipartite graph with 10 teleoperators on one side and 4 task types on the other. The dataset provides average task completion times for each teleoperator-task pair. An edge is established between a teleoperator and a task type if their average time matches or exceeds the task type's median value. Following this process, a teleoperator who consistently performed tasks slower than the median was identified and subsequently excluded from the graph. 
\ifarxiv
More experiments on a synthetic dataset in which the numbers of teleoperators and task types are varied can be found in the Appendix. 
\else
More experiments on a synthetic dataset in which the numbers of teleoperators and task types are varied can be found in~\cite{trabelsi2024full}. 
\fi

\begin{algorithm}[th!]
\DontPrintSemicolon
\textbf{Offline Phase}: \;
Solve \pra (\prb) and let $\{x_{ij}\}$ be an optimal solution.\;
\textbf{Online Phase}:\;
\For{each task of type $j$ that arrive on time $t$}
{
Let $\cQ_i$ be a queue of worker $i$ and let $F$ be the subset of free workers on time $t$\;
If $F\neq \emptyset$, randomly choose a free worker $i$ with probability $\{x_{ij}/\sum_{i' \in F}{x_{i'j}}\}$\;
Otherwise choose randomly a worker $i$ following the probabilities in $\{x_{ij}\}$\;
Update $\cQ_i=\cQ_i \cup \{(j,t)\}$. \;}
\caption{A heuristic for \obja(\objb).}
\label{alg:ha}
\end{algorithm}

\paragraph{Experimental environment and more settings}
Each experiment spanned a virtual 4-week period. Due to algorithmic stochasticity, each experiment was repeated 10 times.

\subsection{Results and Discussion}

\paragraph{Effect of changing $\kappa$}
In Figures~\ref{fig:window}(a,b), we illustrate the performance of various methods across diverse $\kappa$ values. 
In Figure ~\ref{fig:window}(a), we measure the maximum task waiting time. We see that the gap between SIM(\prb) and  SIM(\pra), as well as 
the gap between SIM-F(\prb) and SIM-F(\pra), increase with $\kappa$.
This aligns with the fact that with higher values of $\kappa$, the approximation ratio of \prb's solution relative to \pra's objective is greater. However, the ratio between the different methods measured in practice is lower than the worst-case theoretical ratio given by Theorem~\ref{thm:main} (which is greater than $\kappa^3$). 

In Figure~\ref{fig:window}(b), we measure the maximum worker workload. The differences between SIM(\pra) vs SIM(\prb) are very small for $\kappa\leq 3$, but they become more significant for $\kappa\in\{4,5\}$.
Surprisingly, there is a different effect with SIM-F(\pra) and SIM-F(\prb). SIM-F(\pra) performs slightly better than SIM-F(\prb). We conjecture that the initial selection of free workers has a more detrimental effect in SIM-F(\prb), which integrates two distinctly different methods, in contrast to the relatively similar approaches in 
SIM-F(\pra). We also see that for larger values of $\kappa$, both SIM(\pra) and SIM(\prb) perform worse than for lower values.


\paragraph{Effect of changing the task load}
Figures~\ref{fig:window}(c,d) might help the teleoperation center's owner decide whether the current number of workers is sufficient. It is noticeable that in Figure~\ref{fig:window}(c) there is a significant jump from 120000 to 140000 tasks per day. This means that perhaps the owner should employ more workers in this case. 
Referring to Figure 2(d) may lead us to similar conclusions. Employing more workers is advisable if individual worker workload is excessively high.

\paragraph{Effect of changing the task balance}
Figures~\ref{fig:window}(e,f) represent the performance of the different algorithms when changing the task balance.
The left bar represents an even distribution for each task type (0.25). The second bar represents a higher probability for the first type (0.7) and a lower probability for the other types (0.1). The other bars are similarly defined for the other task types.

In Figures~\ref{fig:window}(e,f), higher arrival distribution of the first task type leads to elevated waiting times and worker workload. Consequently, the teleoperation center's owner could enhance fairness by upskilling operators who are not qualified for the task or hiring new ones proficient in it. Alternatively, training could be provided to expedite task completion.
The negligible error bars in all figures show that the error approaches 0 if the experiments are carried out over a sufficiently long period of time, as we have done.

\paragraph{Computed optimal values vs simulation values}
In all experiments that we ran, the computed expected maximum waiting time (OPT(\pra)) and the computed expected maximum worker workload (OPT(\prb)) closely align with simulation-derived values (SIM(\pra) and SIM(\prb) respectively). Additionally, the alignment of OPT(\pra) and OPT(\prb) at $\kappa=1$ is consistent with Theorem~\ref{thm:main}. 

\paragraph{Choosing the best algorithm}

The heuristic GTW, which minimizes the maximum task waiting time, performs well at maximum task waiting time and performs poorly at maximum worker workload. Conversely, the greedy heuristic that minimizes the maximum worker workload, GWU, performs well at the maximum worker workload and performs poorly at maximum task waiting time. 
The methods that offer the best tradeoff between two dimensions of fairness are SIM-F(\pra) and SIM-F(\prb). However, since \pra is nonlinear, there is no tool that guarantees to find an optimal solution for $\pra$, and therefore $\pra$-dependent approaches such as SIM-F(\pra) might be unsolvable.

Therefore, if $\kappa=1$ or at least a small number close to $1$, we might want to use SIM-F(\prb).
However, SIM(\prb) might be slightly better if worker workload is more important than task waiting times (but still important).
If $\kappa$ is large, it is advisable to consider using a tool that approximates a solution for \pra with SIM-F(\pra). The figures show that the available tools work adequately in such cases, despite the lack of theoretical guarantees (at least for small problems). Another option is to try both SIM-F(\pra) and SIM-F(\prb) and pick the one that gives the best results. 

\section{Conclusion}

This paper addresses two-sided fairness problems represented as online bipartite matching with accommodated delays. We introduce two minimax problems: \pra to minimize the maximum workload of workers and \prb to minimize the maximum waiting time of tasks. We show that the second problem can be formulated as a linear program and thus solved efficiently. Moreover, we showed that the policy using a solution for \prb approximates the solution for \pra, and we then presented an upper bound on the approximation ratio. Finally, we compared the performance of different approaches (most of them used the solutions to the problems) and empirically evaluated their performance.

Future research may explore different definitions of fairness. In addition, it is promising to extend our approach to scenarios where workers are also arriving dynamically. To demonstrate the need in such scenarios one might consider the teleoperation application where teleoperators (workers) can join or leave the crew. 
Considering different distributions for both task arrivals and task durations can provide more depth and insights into the study.
Finally, it might be beneficial to consider some robust version, say, minimization of the maximum possible absolute waiting time among users, which is equivalent to the minimization of the maximum absolute waiting time among all workers.

\section*{Acknowledgements}
This research has been partially supported by the Israel Science
Foundation under grant 1958/20 and  the EU Project TAILOR
under grant 952215.
Work of Pan Xu was partially supported by
NSF CRII Award IIS-1948157.

\bibliographystyle{named}
\bibliography{refs}

\begin{thebibliography}{}

\bibitem[\protect\citeauthoryear{Ackerman~Viden \bgroup \em et al.\egroup }{2023}]{viden}
Osnat Ackerman~Viden, Yohai Trabelsi, Pan Xu, Karthik~Abinav Sankararaman, Oleg Maksimov, and Sarit Kraus.
\newblock Allocation problem in remote teleoperation: Online matching with offline reusable resources and delayed assignments.
\newblock In {\em Proceedings of the 2023 International Conference on Autonomous Agents and Multiagent Systems}, pages 513--521, 2023.

\bibitem[\protect\citeauthoryear{Asmussen}{2003}]{asmussen2003random}
S{\o}ren Asmussen.
\newblock Random walks.
\newblock {\em Applied Probability and Queues}, pages 220--243, 2003.

\bibitem[\protect\citeauthoryear{Cohen \bgroup \em et al.\egroup }{2021}]{cohen2021solving}
Michael~B Cohen, Yin~Tat Lee, and Zhao Song.
\newblock Solving linear programs in the current matrix multiplication time.
\newblock {\em Journal of the ACM (JACM)}, 68(1):1--39, 2021.

\bibitem[\protect\citeauthoryear{Delong \bgroup \em et al.\egroup }{2022}]{delong2022online}
Steven Delong, Alireza Farhadi, Rad Niazadeh, and Balasubramanian Sivan.
\newblock Online bipartite matching with reusable resources.
\newblock In {\em Proceedings of the 23rd ACM Conference on Economics and Computation}, pages 962--963, 2022.

\bibitem[\protect\citeauthoryear{Devore}{2008}]{devore2008probability}
Jay~L Devore.
\newblock Probability and statistics for engineering and the sciences.
\newblock 2008.

\bibitem[\protect\citeauthoryear{Dickerson \bgroup \em et al.\egroup }{2021}]{dickerson2021allocation}
John~P Dickerson, Karthik~A Sankararaman, Aravind Srinivasan, and Pan Xu.
\newblock Allocation problems in ride-sharing platforms: Online matching with offline reusable resources.
\newblock {\em ACM Transactions on Economics and Computation (TEAC)}, 9(3):1--17, 2021.

\bibitem[\protect\citeauthoryear{Esmaeili \bgroup \em et al.\egroup }{2023}]{esmaeili2022rawlsian}
Seyed Esmaeili, Sharmila Duppala, Davidson Cheng, Vedant Nanda, Aravind Srinivasan, and John~P Dickerson.
\newblock Rawlsian fairness in online bipartite matching: Two-sided, group, and individual.
\newblock In {\em Proc. 37th {AAAI}}, number~5, pages 5624--5632, 2023.

\bibitem[\protect\citeauthoryear{Gallager}{2011}]{gallager2011discrete}
Robert~G Gallager.
\newblock Discrete stochastic processes.
\newblock {\em OpenCourseWare: Massachusetts Institute of Technology}, 2011.

\bibitem[\protect\citeauthoryear{Gupta and Goyal}{1964}]{gupta1964queues}
SK~Gupta and JK~Goyal.
\newblock Queues with poisson input and hyper-exponential output with finite waiting space.
\newblock {\em Operations Research}, 12(1):75--81, 1964.

\bibitem[\protect\citeauthoryear{Karp \bgroup \em et al.\egroup }{1990}]{karp}
Richard~M. Karp, Umesh~V. Vazirani, and Vijay~V. Vazirani.
\newblock An optimal algorithm for on-line bipartite matching.
\newblock {\em STOC-90}, 1990.

\bibitem[\protect\citeauthoryear{Kendall}{1953}]{kendall1953stochastic}
David~G Kendall.
\newblock Stochastic processes occurring in the theory of queues and their analysis by the method of the imbedded markov chain.
\newblock {\em The Annals of Mathematical Statistics}, pages 338--354, 1953.

\bibitem[\protect\citeauthoryear{Lesmana \bgroup \em et al.\egroup }{2019}]{lesmana2019balancing}
Nixie~S Lesmana, Xuan Zhang, and Xiaohui Bei.
\newblock Balancing efficiency and fairness in on-demand ridesourcing.
\newblock {\em Advances in neural information processing systems}, 32, 2019.

\bibitem[\protect\citeauthoryear{Li \bgroup \em et al.\egroup }{2023}]{li2023fully}
Zihao Li, Hao Wang, and Zhenzhen Yan.
\newblock Fully online matching with stochastic arrivals and departures.
\newblock In {\em Proc. 37th {AAAI}}, number~10, pages 12014--12021, 2023.

\bibitem[\protect\citeauthoryear{Ma \bgroup \em et al.\egroup }{2020}]{ma2020group}
Will Ma, Pan Xu, and Yifan Xu.
\newblock Group-level fairness maximization in online bipartite matching.
\newblock {\em arXiv preprint arXiv:2011.13908}, 2020.

\bibitem[\protect\citeauthoryear{Maister and others}{1984}]{maister1984psychology}
David~H Maister et~al.
\newblock {\em The psychology of waiting lines}.
\newblock Citeseer, 1984.

\bibitem[\protect\citeauthoryear{Patro \bgroup \em et al.\egroup }{2020}]{patro2020fairrec}
Gourab~K Patro, Arpita Biswas, Niloy Ganguly, Krishna~P Gummadi, and Abhijnan Chakraborty.
\newblock Fairrec: Two-sided fairness for personalized recommendations in two-sided platforms.
\newblock In {\em Proceedings of the web conference 2020}, pages 1194--1204, 2020.

\bibitem[\protect\citeauthoryear{Rawls}{1958}]{rawls1958justice}
John Rawls.
\newblock Justice as fairness.
\newblock {\em The philosophical review}, 67(2):164--194, 1958.

\bibitem[\protect\citeauthoryear{Rawls}{1999}]{rawls1999theory}
John Rawls.
\newblock {\em A Theory of Justice}.
\newblock Harvard University Press, Cambridge, MA, 1999.

\bibitem[\protect\citeauthoryear{Righter}{1987}]{righter1987stochastic}
Rhonda Righter.
\newblock The stochastic sequential assignment problem with random deadlines.
\newblock {\em Probability in the Engineering and Informational Sciences}, 1(2):189--202, 1987.

\bibitem[\protect\citeauthoryear{Rigter \bgroup \em et al.\egroup }{2022}]{rigter2022optimal}
Marc Rigter, Danial Dervovic, Parisa Hassanzadeh, Jason Long, Parisa Zehtabi, and Daniele Magazzeni.
\newblock Optimal admission control for multiclass queues with time-varying arrival rates via state abstraction.
\newblock In {\em Proc. 36th {AAAI}}, number~9, pages 9918--9925, 2022.

\bibitem[\protect\citeauthoryear{Tener and Lanir}{2022}]{tener2022driving}
Felix Tener and Joel Lanir.
\newblock Driving from a distance: Challenges and guidelines for autonomous vehicle teleoperation interfaces.
\newblock In {\em Proceedings of the 2022 CHI Conference on Human Factors in Computing Systems}, pages 1--13, 2022.

\bibitem[\protect\citeauthoryear{Trabelsi}{2024}]{two_sided_fairness}
Yohai Trabelsi.
\newblock Code and data: Design a win-win strategy that is fair to both service providers and tasks when rejection is not an option.
\newblock \url{https://github.com/yohayt/two_sided_fairness}, 2024.
\newblock Accessed: 05/05/2024.

\bibitem[\protect\citeauthoryear{Zhang}{2020}]{zhang2020toward}
Tao Zhang.
\newblock Toward automated vehicle teleoperation: Vision, opportunities, and challenges.
\newblock {\em IEEE Internet of Things Journal}, 7(12):11347--11354, 2020.

\bibitem[\protect\citeauthoryear{Zhou \bgroup \em et al.\egroup }{2023}]{zhou2023subgroup}
Quan Zhou, Jakub Mare{\v{c}}ek, and Robert Shorten.
\newblock Subgroup fairness in two-sided markets.
\newblock {\em Plos one}, 18(2):e0281443, 2023.

\end{thebibliography}

\ifarxiv
\clearpage
 \onecolumn
 \appendix
\begin{center}
\fbox{{\Large\textbf{Technical Appendix}}}
\end{center}

\bigskip

\noindent
\textbf{Paper title:}~Design a Win-Win Strategy That Is Fair to Both Service Providers and Tasks When Rejection is Not an Option

\smallskip
\noindent
\textbf{Paper id:}~  185
\medskip

\section{Objective Function in Program \pra Can Be Neither Convex nor Concave (Even When $\kap=1$)}
We show that the optimization program \pra can be minimization of a non-convex function.  Consider the example shown in Figure~\ref{fig:exam-2}, where $\lam_j=\lam$ for all $j \in J$, and $\mu_{ij}=\mu_i, \forall j \sim i, \forall i \in I$. Set  $\phi_i=\lam/\mu_i$, for each $i \in I$. Let $x$ be the value on edge $(i=1,j=1)$, and thus, $1-x$ be that on edge $(i=2, j=1)$. Similarly, let $y$ and $1-y$ be the values on edges $(i=3,j=2)$ and $(i=2,j=2)$. We can verify that
\begin{align*}
\rho_1&=\phi_1 \cdot x, \rho_2=\phi_2 (1-x+1-y), \rho_3=\phi_3 \cdot y;\\
\bw_1 &=\frac{x}{1-\phi_1 \cdot x}+\frac{1-x}{1-\phi_2 (1-x+1-y)}-1\\
\bw_2&=\frac{y}{1-\phi_3 \cdot y}+\frac{1-y}{1-\phi_2 (1-x+1-y)}-1.
\end{align*}
Let $\bph=(\phi_1,\phi_2,\phi_3)$, and assume $\phi_1 \le 1, \phi_3 \le 1, \phi_2 \le 1/2$. Under these assumptions, we see the feasible region of \pra can be reduced to $\Omega=\{(x,y): 0 \le x, y\le 1\}$.
\begin{align*}
f_{\bph}(x,y)&=\max(\bw_1, \bw_2)=\max\bP{\frac{x}{1-\phi_1 \cdot x}+\frac{1-x}{1-\phi_2 (1-x+1-y)}-1,\frac{y}{1-\phi_3 \cdot y}+\frac{1-y}{1-\phi_2 (1-x+1-y)}-1}\\
\Omega&=\{(x,y): 0 \le x, y\le 1\}.
\end{align*}

Furthermore, set $\phi_1=\phi_3=p \in [0,1]$ and $\phi_2=q \in [0,1/2]$. Define 
\begin{align*}
g_{p,q}(x,y)=\frac{x}{1-p \cdot x}+\frac{1-x}{1-q (2-x-y)}-1, ~~h_{p,q}(x,y)=\frac{y}{1-p \cdot y}+\frac{1-y}{1-q (2-x-y)}-1.
\end{align*}
Thus, the original program \pra can be simplified as
\begin{align}
 \min f_{p,q}(x,y):=\max\bp{g_{p,q}(x,y), h_{p,q}(x,y)}:~~0 \le x, y\le 1.
\end{align}
By symmetry, we can assume WLOG that $x \ge y$. Then we see
\begin{align*}
f_{p,q}(x,y)&=h_{p,q}(x,y)=\frac{y}{1-p \cdot y}+\frac{1-y}{1-q (2-x-y)}-1, && \mbox{~if~} 2q+p^2 xy \ge (p+q) \cdot (x+y), ~0 \le y \le x \le 1;\\
&=g_{p,q}(x,y)=\frac{x}{1-p \cdot x}+\frac{1-x}{1-q (2-x-y)}-1, &&\mbox{~if~} 2q+p^2 xy \le (p+q) \cdot (x+y), ~0 \le y \le x \le 1.
\end{align*}
We can verify that the region $\Omega_1:=\{(x,y): 2q+p^2 xy \ge (p+q) \cdot (x+y), ~0 \le y \le x \le 1\}$ is convex, while the function $f_{p,q}(x,y)=h_{p,q}(x,y)$ is \emph{neither convex nor concave} over $\Omega_1$ under many settings of $(p,q)$, say, \eg $p=0.7$ and $q=0.4$. This establishes $f_{p,q}(x,y)$ is neither convex or concave over the original region $\Omega=\{(x,y): 0 \le x, y\le 1\}$.

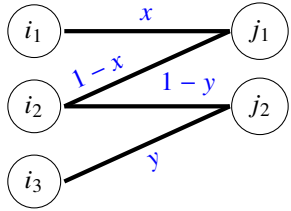
\begin{figure}[th!]
\center
 \begin{tikzpicture}
  \draw (0,0) node[minimum size=0.2mm,draw,circle] {$i_1$};
  \draw (0,-1) node[minimum size=0.2mm,draw,circle] {$i_2$};
    \draw (0,-2) node[minimum size=0.2mm,draw,circle] {$i_3$};
  \draw (3,0) node[minimum size=1mm,draw,circle] {$j_1$};

    \draw (3,-1) node[minimum size=1mm,draw,circle] {$j_2$};

\draw[ultra thick, -] (0.4,0)--(2.6,0) node [blue, above, midway, sloped] {$x$};
\draw[ultra thick, -] (0.4,-1)--(2.6,-1) node [blue, above, near end, sloped] {$1-y$};

\draw[ultra thick, -] (0.4,-1)--(2.6,0) node [blue, above, near start, sloped] {$1-x$};

\draw[ultra thick, -] (0.4,-2)--(2.6,-1) node [blue, below, midway, sloped] {$y$};

\end{tikzpicture}
\caption{A toy example showing the objective function in Program \pra can be neither convex nor concave even when $\kap=1$.} \label{fig:exam-2}
\end{figure}

\section{Proof of Lemma~\ref{lem:lam}}
Consider a given input setting characterized by $\bmu:=(\mu_{ij})$ and $\blam:=(\lam_{j})$. Recall that $\eta_t^*(\blam,\bmu)$ and $\eta_s^*(\blam,\bmu)$ denote the optimal values of $\pra$ and $\prb$ with respect to $\blam$ and $\bmu$. We focus on the case when $k=2$, and all the analysis can be straightforwardly generalized to any generic integer $k$. Consider a modified version of $\blam$ in which the first request type $j=1$ is split into two copies, $j=0$ and $j=1$, each having an arrival rate of $\lam_1/2$. Let $\bola=(\balam_j)$ be this modified version with $\balam_0=\balam_1=\lam_1/2$ and $\balam_j=\lam_j$ for all $j \ge 2$. Let $\bar{J}=J \cup \{0\}$ be the resulting set of online types.

\begin{lemma}\label{lem:lama}
$\eta_t^*(\blam,\bmu)=\eta_t^*(\bola,\bmu)$, and $\eta_s^*(\blam,\bmu)=\eta_s^*(\bola,\bmu)$, 
\end{lemma}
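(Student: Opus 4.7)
The plan is to prove both equalities by exhibiting a pair of constructions that convert feasible solutions back and forth between $\pra(\blam,\bmu)$ and $\pra(\bola,\bmu)$ (and likewise for $\prb$). For the direction $\eta^*_t(\bola,\bmu)\le \eta^*_t(\blam,\bmu)$, I would start from any optimal $\x^*$ of $\pra(\blam,\bmu)$ and \emph{duplicate}: set $\bar x_{i0}=\bar x_{i1}=x^*_{i1}$ for each $i\sim 1$ and $\bar x_{ij}=x^*_{ij}$ otherwise. Because $\mu_{i0}=\mu_{i1}$ and $\balam_0=\balam_1=\lam_1/2$, the contribution of the two copies to any weighted sum of the form $\sum_{\ell\sim i}\bar x_{i\ell}\,\balam_\ell\, g(\mu_{i\ell})$ coincides with that of the single original type $\ell=1$. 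Hence $\rho_i$, the numerator $\sum_{\ell\sim i}x_{i\ell}\lam_\ell/\mu_{i\ell}^2$, and the denominator $1-\sum_{\ell\sim i}x_{i\ell}\lam_\ell/\mu_{i\ell}$ appearing inside every $\bw_j$ are preserved; a short computation then yields $\bw_j(\bar\x)=\bw_j(\x^*)$ for $j\ge 2$ and $\bw_0(\bar\x)=\bw_1(\bar\x)=\bw_1(\x^*)$, so the max-objective is unchanged.

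For the reverse direction $\eta^*_t(\blam,\bmu)\le \eta^*_t(\bola,\bmu)$, I would take an optimal $\bar\x^*$ of $\pra(\bola,\bmu)$ and \emph{merge by averaging}: set $x_{i1}=(\bar x^*_{i0}+\bar x^*_{i1})/2$ for each $i\sim 1$ and $x_{ij}=\bar x^*_{ij}$ otherwise. The constraint $\sum_{i\sim 1}x_{i1}=1$ follows by adding the two constraints for $j\in\{0,1\}$ and dividing by two, and the same weighted-sum identity, run in reverse, shows that $\rho_i$ together with the bracketed fractional term in $\bw_j$ is unchanged at every $i$. The crucial observation is that under this merge one has $\bw_1(\x)=\frac{1}{2}\bigl(\bw_0(\bar\x^*)+\bw_1(\bar\x^*)\bigr)$, since $\bw_1(\x)$ depends linearly on $\{x_{i1}\}$ while all other ingredients are fixed and $\mu_{i0}=\mu_{i1}$. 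Hence $\bw_1(\x)\le \max\bigl(\bw_0(\bar\x^*),\bw_1(\bar\x^*)\bigr)$, which, combined with $\bw_j(\x)=\bw_j(\bar\x^*)$ for $j\ge 2$, gives $\max_{j\in J}\bw_j(\x)\le \max_{j\in \bar J}\bw_j(\bar\x^*)$.

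The companion claim $\eta_s^*(\blam,\bmu)=\eta_s^*(\bola,\bmu)$ is handled by the \emph{same} two constructions and is in fact strictly easier: since the $\prb$ objective depends only on $\{\rho_i\}$, which is exactly preserved under both duplication and averaging, the max-workload objective is literally invariant in both directions. The only step where I expect a conceptual subtlety is the reverse direction for $\pra$, where merging produces the \emph{mean} of the two copy-objectives rather than either one individually; this is precisely where the trivial bound ``mean $\le$ max'' is invoked, and is the single place in the argument that is not a purely syntactic rewrite. Everything else reduces to checking that the shared numerator, denominator, and workload terms are invariant under the two substitutions, which is routine algebra.
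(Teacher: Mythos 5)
Your proposal is correct and follows essentially the same route as the paper's own proof: duplicate the optimal solution to pass from $\blam$ to $\bola$, merge by averaging to go back, observe that $\rho_i$ and the queue-level quantities are preserved so $\bw_j$ is unchanged for $j\ge 2$, and use $\bw_1(\x)=\tfrac{1}{2}\bigl(\bw_0(\bar\x^*)+\bw_1(\bar\x^*)\bigr)\le\max$ in the reverse direction, with the $\prb$ case following immediately since its objective depends only on $\{\rho_i\}$. No gaps.
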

\begin{proof}
 Let us focus on showing the first equality from Lemma~\ref{lem:lama}. Suppose $\x=(x_{ij})$ is an optimal solution to $\pra(\blam,\bmu)$. Consider a modified solution $\y=(y_{ij})$ that satisfies: (1) $y_{ij}=x_{ij}$ for all $j \ge 2$ and $i \sim j$, and (2) $y_{i,0}=y_{i,1}=x_{i,1}$. Let $\rho_i(\x)$ and $\rho_i(\y)$ be the values of $\rho_i$ with respect to $\x$ under the setting of $(\blam,\bmu)$ and $\y$ under the setting of $(\bola,\bmu)$, respectively. Similarly, let $\bw_j(\x)$ and $\bw_j(\y)$ represent the values of $\bw_j$ with respect to $\x$ and $\y$, respectively. We can verify that $\y$ is feasible for $\pra(\bola,\bmu)$. Furthermore, $\rho_i(\x)=\rho_i(\y)$ for all $i \in I$, and $\bw_j(\x)=\bw_j(\y)$ for all $j\ge 2$, and $\bw_0(\y)=\bw_1(\y)=\bw_{1}(\x)$. Thus, we claim that:
\[
\eta_t^*(\bola,\bmu) \le \max_{j \in \bar{J} } \bw_j(\y)=\max_{j  \in J } \bw_j(\x)=\eta_t^*(\blam,\bmu).
\]

Now we prove the other direction. Let $\y=(y_{ij})$ be an optimal solution to $\pra(\bola,\bmu)$. Consider a modified solution $\x=(x_{ij})$ that satisfies: (1) $x_{ij}=y_{ij}$ for all $j \ge 2$ and $i \sim j$, and (2) $x_{i,1}=(y_{i,0}+y_{i,1})/2$ for all $i \sim j$. We can verify that $\x$ is feasible for $\pra(\blam,\bmu)$. Furthermore, $\rho_i(\x)=\rho_i(\y)$ for all $i \in I$, and $\bw_j(\x)=\bw_j(\y)$ for all $j\ge 2$, and $\bw_1(\x)=(\bw_{0}(\y)+\bw_1(\y))/2$. Thus, we claim that:
\[
\eta_t^*(\blam,\bmu) \le \max_{j \in J } \bw_j(\x) \le \max_{j  \in \bar{J} } \bw_j(\y)=\eta_t^*(\bola,\bmu).
\]
This shows that the first equality holds, and the second equality follows from the same argument.
\end{proof}

\section{Proof of Theorem~\ref{thm:main}}

Now, we consider a general setting $\bmu=(\mu_{ij})$ with $\kap=\max_{i \in I} \max_{j \sim i, j' \sim i} \mu_{ij}/\mu_{i,j'} \ge 1$. Consider such a virtual instance that for each $i \in I$, all of $\mu_{ij}$ with $j \sim i$ are replaced with ${\mu}_i:=\max_{j \sim i}\mu_{ij}$. Let $\bomu=(\bamu_{ij})$ be the modified version of $\bmu$ such that $\bamu_{ij}=\mu_i$ for every $j \sim i$ and $i \in I$. By Claims~\ref{claim:main-3} and~\ref{claim:>=}, we see that $\eta_t^*(\bomu)=-1+1/(1-\eta_s^*(\bomu))$, where $\eta_t^*(\bomu)$ and $\eta_s^*(\bomu)$ denote the optimal values of \pra and \prb under $\bomu$, respectively. Similarly,  $\eta_t^*(\bmu)$ and $\eta_s^*(\bmu)$ denote the optimal values of \pra and \prb under $\bmu$, respectively.

\begin{lemma}\label{lem:19-a}
(1) $\eta_t^*(\bmu) \ge \eta_t^*(\bomu)/\kap$; (2) $\eta_s^*(\bomu) \ge \eta_s^*(\bmu)/\kap$.
\end{lemma}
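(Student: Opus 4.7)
My plan is to establish both inequalities by transplanting an optimal solution of one program into the other and then comparing objectives term by term, exploiting the two-sided bound $\mu_{ij} \leq \bamu_{ij} \leq \kap \cdot \mu_{ij}$ for every edge $(ij) \in E$ that is immediate from the definition of $\bamu_{ij}=\mu_i:=\max_{j \sim i}\mu_{ij}$ and of $\kap$.

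For part (1), I would take $\bx^*$ optimal for $\pra(\bmu)$. Since $\bamu_{i\ell} \geq \mu_{i\ell}$ implies $\rho_i(\bomu, \bx^*) \leq \rho_i(\bmu, \bx^*) \leq 1$, the solution $\bx^*$ is automatically feasible for $\pra(\bomu)$. The core step is the pointwise estimate $\bw_j(\bomu, \bx^*) \leq \kap \cdot \bw_j(\bmu, \bx^*)$ for every $j \in J$. I would write $\bw_j=\sum_{i \sim j} x_{ij}\mu_{ij}\, w_i$ using~\eqref{eqn:w_i}, and then combine two ingredients: first, $\mu_i \leq \kap \cdot \mu_{ij}$ for every $j \sim i$ directly from the definition of $\kap$; second, the Pollaczek-Khinchin formula makes $w_i$ monotone nonincreasing in each $\mu_{i\ell}$, because enlarging $\mu_{i\ell}$ shrinks the numerator $\sum_{\ell} x_{i\ell} \lam_\ell/\mu_{i\ell}^2$ and decreases $\rho_i$, enlarging the denominator $1-\rho_i$, so $w_i(\bomu,\bx^*)\leq w_i(\bmu,\bx^*)$. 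Multiplying these two factors edge by edge and summing gives the claimed pointwise bound, and taking the max yields $\eta_t^*(\bomu) \leq \max_j \bw_j(\bomu,\bx^*) \leq \kap \cdot \eta_t^*(\bmu)$.

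For part (2), I would take $\bx^*$ optimal for $\prb(\bomu)$. The reverse direction of the sandwich, $\mu_{ij}\geq \bamu_{ij}/\kap$, gives $\rho_i(\bmu,\bx) \leq \kap \cdot \rho_i(\bomu,\bx)$ for any $\bx$. The subtlety here is that $\bx^*$ need not satisfy the stability constraint $\rho_i \leq 1$ under the original rates $\bmu$, so I would split into two cases. If $\kap \cdot \eta_s^*(\bomu) \leq 1$, then each $\rho_i(\bmu,\bx^*) \leq \kap \cdot \eta_s^*(\bomu) \leq 1$, so $\bx^*$ is feasible for $\prb(\bmu)$ and $\eta_s^*(\bmu) \leq \max_i \rho_i(\bmu,\bx^*) \leq \kap \cdot \eta_s^*(\bomu)$. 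Otherwise $\eta_s^*(\bomu) > 1/\kap$, while the feasibility of $\prb(\bmu)$ forces $\eta_s^*(\bmu) \leq 1$ via the constraint $\rho_i \leq 1$, so trivially $\eta_s^*(\bmu)/\kap \leq 1/\kap < \eta_s^*(\bomu)$. Both cases give $\eta_s^*(\bomu) \geq \eta_s^*(\bmu)/\kap$.

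The step I expect to require the most care is the monotonicity $w_i(\bomu,\bx)\leq w_i(\bmu,\bx)$ in part (1), since one must verify that enlarging the service rates $\mu_{i\ell}$ to $\mu_i$ simultaneously decreases the numerator and increases the denominator of the Pollaczek-Khinchin expression; this is essentially the only nonroutine inequality in the whole argument. The secondary concern is the feasibility case split in part (2): one cannot simply plug $\bx^*_\bomu$ into $\prb(\bmu)$ because the stability constraint may be violated by a factor up to $\kap$, and the ``$\eta_s^*(\bmu) \leq 1$'' shortcut in the second case is what makes the argument work cleanly.
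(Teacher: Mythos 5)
Your proposal is correct and follows essentially the same route as the paper: part (1) is the same two-sided sandwich $\mu_{i\ell}\le\mu_i\le\kap\,\mu_{i\ell}$ combined with monotonicity of the waiting-time expression in the service rates (the paper packages this as $\mu_{ij}\ge\mu_{i\ell}/\kap$ plus monotonicity of $\rho/(1-\rho)$, which is the same computation), together with the observation that the optimal solution of $\pra(\bmu)$ stays feasible for $\pra(\bomu)$. The only cosmetic divergence is in part (2): the paper drops the constraints $\rho_i\le 1$ by passing to a relaxed program $\prc$ whose optimum coincides with that of $\prb$ under the standing feasibility assumption, whereas you handle the possible violation of $\rho_i\le 1$ by a case split using $\eta_s^*(\bmu)\le 1$ --- both devices rest on the same assumption and are equivalent in substance.
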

\begin{proof}
We prove the first inequality as follows. 
Consider any optimal solution $\x=(x_{ij})$ to $\pra(\bmu)$. Observe that for any $j \in J$, 
\begin{align*}
\bw_j (\bmu,\x) &=\sum_{i \sim j} x_{ij} \cdot \mu_{ij} \cdot \frac{\sum_{\ell \sim i} x_{i \ell} \lam_\ell/\mu_{i \ell}^2}{1-\sum_{\ell \sim i} x_{i \ell} \lam_\ell/\mu_{i \ell}}\\
&\ge (1/\kap) \cdot \sum_{i \sim j} x_{ij} \cdot \frac{\sum_{\ell \sim i} x_{i \ell} \lam_\ell/\mu_{i \ell}}{1-\sum_{\ell \sim i} x_{i \ell} \lam_\ell/\mu_{i \ell}}=(1/\kap) \cdot  \sum_{i \sim j} x_{ij} \cdot \frac{\rho_i(\bmu, \x)}{1-\rho(\bmu,\x)}\\
&=(1/\kap) \cdot  \sum_{i \sim j} x_{ij} \cdot \bp{-1+\frac{1}{1-\rho_i(\bmu,\x)}}
\ge(1/\kap) \cdot  \sum_{i \sim j} x_{ij} \cdot \bp{-1+\frac{1}{1-\rho_i(\bomu,\x)}},
 \end{align*}
 where the inequality on the last line is due to
 \begin{align}\label{ineq:19-a}
\rho_i(\bmu,\x)=\sum_{\ell \sim i} x_{i \ell} \lam_\ell/\mu_{i\ell} \ge \sum_{\ell \sim i} x_{i \ell} \lam_\ell/\mu_{i}=\sum_{\ell \sim i} x_{i \ell} \lam_\ell/\bar{\mu}_{ij}=\rho_i(\bomu,\x).
\end{align}
Therefore, 
 \begin{align*}
\eta_t^*(\bmu) &=\max_{j \in J} \bw_j (\bmu,\x) \ge 
(1/\kap) \cdot \max_{j \in J} \sum_{i \sim j} x_{ij} \cdot \bp{-1+\frac{1}{1-\rho_i(\bomu,\x)}} \ge (1/\kap) \cdot \eta_t^*(\bomu),
\end{align*}
where the last inequality above is valid since $\x$ is optimal to $\pra(\bmu)$, and thus, it is feasible to $\pra(\bomu)$ since for each $j \in J$, $\sum_{i \sim j}x_{ij}=1$ and for each $i \in I$, $\rho_i(\bomu,\x) \le \rho_i(\bmu,\x) \le 1$ due to Inequality~\eqref{ineq:19-a}.

Now, we show the second one. Note that we assume by default that $\prb (\bmu)$ is feasible with $\eta^*_b(\bmu) \le 1$. So is $\prb(\bomu)$.  We claim that the optimal value of $\eta^*_b(\bmu)$ remains invariant after removing the constraints of $\rho_i \le 1$ for all $i \in I$. Let \prc be the program of $\prb$ after removing the constraints $\rho_i \le 1$ for all $i \in I$, and suppose $\eta_c (\bmu, \x)$ is the corresponding value of \prc with respect to $\bmu$ and $\x$. Observe that $\eta_s^*(\bmu)=\eta_c^{*}(\bmu)$ and $\eta_s^*(\bomu)=\eta_c^*(\bomu)$. Consider an optimal solution $\x=(x_{ij})$ to $\prb(\bomu)$. We can verify that it is surely feasible to $\prc(\bmu)$. Thus,
\begin{align}\label{ineq:19-b}
\eta^*_b(\bomu)&=\eta_s(\bomu,\x)=\eta_c(\bomu, \x) \ge \eta_c(\bmu, \x)/\kap  \ge \eta_c^*(\bmu)/\kap=\eta_s^*(\bmu)/\kap,
\end{align}
 where (a) the first inequality on~\eqref{ineq:19-b} follows from that 
 \[
 \eta_c(\bomu, \x)=\max_{i \in I}\sum_{\ell \sim i} x_{i\ell} \cdot \lam_\ell/\bar{\mu}_{i\ell} =\max_{i \in I} \sum_{\ell \sim i} x_{i\ell} \cdot \lam_\ell/\mu_{i} \ge \max_{i \in I} \sum_{\ell \sim i} x_{i\ell} \cdot \lam_\ell/(\kap \cdot \mu_{i \ell}) =\eta_c(\bmu, \x)/\kap;
 \]
 and (b) the second inequality on~\eqref{ineq:19-b} is valid since $\x$ is feasible to $\prc(\bmu)$ and $\eta_c^*(\bmu)$ is the optimal value.
\end{proof}
 
\begin{proof}[\tbf{Proof of Theorem~\ref{thm:main}}]
Let $\x=(x_{ij})$ be an optimal solution to $\prb(\bmu)$. Observe that $\x$ is feasible to $\pra(\bmu)$ since the two programs $\pra(\mu)$ and $\prb(\mu)$ share the same set of constraints.
\begingroup
\allowdisplaybreaks
\begin{align*}
\eta_t(\bmu, \x) &=\max_{j \in J} \sum_{i \sim j} x_{ij} \cdot \mu_{ij} \cdot \frac{\sum_{\ell \sim i} x_{i \ell} \lam_\ell/\mu_{i \ell}^2}{1-\sum_{\ell \sim i} x_{i \ell} \lam_\ell/\mu_{i \ell}}\\
&\le \max_{j \in J} \kap \cdot \sum_{i \sim j} x_{ij} \cdot \frac{\sum_{\ell \sim i} x_{i \ell} \lam_\ell/\mu_{i \ell}}{1-\sum_{\ell \sim i} x_{i \ell} \lam_\ell/\mu_{i \ell}}=\kap \cdot \max_{j \in J}  \sum_{i \sim j} x_{ij} \cdot \bp{-1+\frac{1}{1-\rho_i(\bmu, \x)}}\\
& \le\kap \cdot \bp{-1+\frac{1}{1-\max_{i \in I} \rho_i(\bmu,\x)}}=\kap \cdot \bp{-1+\frac{1}{1-\eta_s^*(\bmu)}}\\
&= \kap \cdot {\frac{\eta_s^*(\bmu)}{1- \eta_s^*(\bmu) }} = \kap \cdot {\frac{\eta_s^*(\bomu)}{1- \eta_s^*(\bomu) }} \cdot \bp{\frac{\eta_s^*(\bmu)}{1- \eta_s^*(\bmu) }}/\bp{\frac{\eta_s^*(\bomu)}{1- \eta_s^*(\bomu) }}\\
&=\kap \cdot \eta_t^*(\bomu) \cdot \frac{\eta_s^*(\bmu)}{\eta_s^*(\bomu)} \cdot \frac{1-\eta_s^*(\bomu)}{1-\eta_s^*(\bmu)} ~~\sbp{\mbox{by Claims~\ref{claim:main-3} and~\ref{claim:>=} on $\bomu$}}\\
 &\le \kap^3 \cdot \eta_t^*(\bmu) \cdot\frac{1-\eta_s^*(\bmu)/\kap}{1-\eta_s^*(\bmu)}=\kap^3 \cdot \eta_t^*(\bmu) \cdot\bp{1+\bp{1-\frac{1}{\kap}} \cdot \frac{\eta_s^*(\bmu)}{1-\eta_s^*(\bmu)}} ~~\sbp{\mbox{by Lemma~\ref{lem:19-a}}}.
\end{align*}
\endgroup
\end{proof} 

\section{Proof of Lemma~\ref{lem:>=}}

We split the proof of Lemma~\ref{lem:>=} into the following two claims. For any $S \subseteq J$ with $S \neq \emptyset$, let $\lam(S)=\sum_{j \in S} \lam_j=\lam \cdot |S|$, and 
$\mu(S)=\sum_{i \in \pa(S)} \mu_i$ with $\pa(S)=\{i: \exists j \in S, i \sim j\}$ being set of neighbors of $S$.\footnote{Note that $\pa(S)$ includes all possible neighbors of nodes in $S$, but nodes in $\pa(S)$ may have neighbors beyond $S$.}
\begin{claim}\label{claim:main-3}
$\eta_s^*=\max_{S \subseteq J, S \neq \emptyset} \lam(S)/\mu(S)$.
\end{claim}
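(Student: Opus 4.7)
\medskip

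\noindent\textbf{Proof plan for Claim~\ref{claim:main-3}.}
The plan is to establish the equality by proving the two inequalities separately; the lower bound is a direct averaging argument on any feasible solution, while the upper bound is the harder direction and I will handle it via a max-flow/min-cut construction.

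\medskip

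\noindent\emph{Lower bound $\eta_s^* \ge \lam(S)/\mu(S)$ for every nonempty $S$.} Fix any feasible $\x$ for $\obar{\prb}$ and any nonempty $S\subseteq J$. Since every $j\in S$ is assigned fully to some worker in $\pa(S)$, summing $\sum_{i\sim j} x_{ij}=1$ over $j\in S$ gives $\sum_{i\in\pa(S)}\sum_{j\in S,\,j\sim i} x_{ij}\ge |S|$. Multiplying by $\lam$ and rewriting in terms of $\rho_i = (\lam/\mu_i)\sum_{j\sim i}x_{ij}$, I get $\sum_{i\in\pa(S)} \rho_i \mu_i \ge \lam\cdot|S| = \lam(S)$. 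Then by the standard weighted-average lower bound, $\max_{i\in I}\rho_i \ge \sum_{i\in\pa(S)}\rho_i\mu_i / \sum_{i\in\pa(S)}\mu_i \ge \lam(S)/\mu(S)$. Taking the min over $\x$ and then the max over $S$ yields $\eta_s^* \ge \max_S \lam(S)/\mu(S)$.

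\medskip

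\noindent\emph{Upper bound $\eta_s^* \le \rho^*$ where $\rho^* := \max_S \lam(S)/\mu(S)$.} Build a flow network on $\{s,t\}\cup J\cup I$ with edges $s\to j$ of capacity $\lam$ for each $j\in J$; edges $j\to i$ of infinite capacity for each $(ij)\in E$; and edges $i\to t$ of capacity $\rho^*\mu_i$ for each $i\in I$. I will show the maximum $s$--$t$ flow equals $\lam\cdot|J|$, which saturates all source edges; then setting $x_{ij}$ equal to the flow on $(j,i)$ divided by $\lam$ produces a feasible $\x$ for $\obar{\prb}$ with $\rho_i=(\text{flow into }i)/\mu_i\le\rho^*$, giving $\eta_s^*\le\rho^*$.

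\medskip

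\noindent\emph{Max-flow calculation (the main obstacle).} To bound the max flow from below by $\lam\cdot|J|$ I will use min-cut. Any finite-capacity $s$--$t$ cut must place, for every $j$ on the source side, all its neighbors in $I$ on the source side (else an infinite $j\to i$ edge crosses). So any finite cut is determined by a set $S'\subseteq J$ on the source side (and at minimum the set $\pa(S')$ on the source side in $I$); its capacity equals $\lam(|J|-|S'|)+\rho^*\mu(S')$, where I have used that adding any $i\notin\pa(S')$ to the source side only increases the cut. For $S'=\emptyset$ this equals $\lam|J|$; for nonempty $S'$, the definition $\rho^*\ge\lam(S')/\mu(S')$ gives $\rho^*\mu(S')\ge\lam|S'|$, so the cut capacity is again $\ge\lam|J|$. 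Hence the min cut equals $\lam|J|$, the max flow equals $\lam|J|$, and the induced $\x$ satisfies $\sum_{i\sim j}x_{ij}=1$ for every $j$ and $\rho_i\le\rho^*$ for every $i$, completing the upper bound and hence the claim.
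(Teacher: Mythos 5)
Your proof is correct, and while your lower bound is the same averaging argument as the paper's (bounding $\max_{i\in\pa(S)}\rho_i$ below by the $\mu_i$-weighted average and using $\sum_{i\in\pa(S)}\lam x_i\ge\lam|S|$), your upper bound takes a genuinely different route. The paper argues extremally: it picks an optimal solution $\y^*$ of $\obar{\prb}$ whose set of saturated workers $I(\y^*)=\{i:\rho_i(\y^*)=\eta_s^*\}$ is smallest, shows by a local perturbation (shifting mass from a saturated to an unsaturated neighbor would either shrink $I(\y^*)$ or improve the optimum) that $\pa(J(\y^*))=I(\y^*)$, and then computes $\eta_s^*=\lam(S^*)/\mu(S^*)$ for the witness set $S^*=J(\y^*)$; this yields structural information about optima, in the spirit of complementary slackness. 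You instead prove $\eta_s^*\le\rho^*:=\max_{S\neq\emptyset}\lam(S)/\mu(S)$ constructively, via a max-flow/min-cut (Hall-type deficiency) argument with source capacities $\lam$ and sink capacities $\rho^*\mu_i$: every finite cut has capacity $\lam(|J|-|S'|)+\rho^*\mu(S')\ge\lam|J|$ by the definition of $\rho^*$, so the max flow saturates all source edges and the induced $x_{ij}$ satisfy $\sum_{i\sim j}x_{ij}=1$ and $\rho_i\le\rho^*$. This is cleaner algorithmically and avoids the perturbation step, at the cost of not exhibiting which set attains the maximum. One small point you should make explicit: feasibility of your constructed $\x$ for $\obar{\prb}$ also requires $\rho_i\le 1$, i.e.\ $\rho^*\le 1$; this follows from the paper's standing assumption that $\obar{\prb}$ is feasible together with your already-proved lower bound ($\rho^*\le\eta_s^*\le 1$), so the argument is complete once you state it.
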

\begin{claim}\label{claim:>=}
$\eta_t^* +1\ge 1/\sbp{1-\lam(S)/\mu(S)}$ for any $S \subseteq J$ with $S \neq \emptyset$.
\end{claim}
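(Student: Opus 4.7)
My plan is to fix an optimal $\x^*=(x^*_{ij})$ to $\obar{\pra}$ achieving value $\eta_t^*$, lower bound the maximum defining $\eta_t^*+1$ by the average over $j\in S$ rather than the worst $j\in J$, exchange the order of summation so the bound becomes a sum over workers $i\in\pa(S)$, and finally apply Jensen's inequality to the convex map $f(t)=t/(1-t)$ on $[0,1)$ to collapse everything into the single aggregate ratio $\lam(S)/\mu(S)$.

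In detail, the first step yields
\[
\eta_t^*+1 \;\ge\; \frac{1}{|S|}\sum_{j\in S}\sum_{i\sim j}\frac{x^*_{ij}}{1-\rho_i} \;=\; \frac{1}{|S|}\sum_{i\in\pa(S)}\frac{a_i}{1-\rho_i},
\]
where $a_i:=\sum_{j\sim i,\, j\in S} x^*_{ij}$ and I used that every $i\sim j$ with $j\in S$ lies in $\pa(S)$. Summing the no-rejection constraints $\sum_{i\sim j}x^*_{ij}=1$ over $j\in S$ gives $\sum_{i\in\pa(S)} a_i = |S|$. Next, using $\rho_i = (\lam/\mu_i)\sum_{j\sim i}x^*_{ij} \ge (\lam/\mu_i)\, a_i$ and setting $t_i := \lam a_i/\mu_i \in [0,1)$, the inequality weakens to $\eta_t^*+1 \ge \tfrac{1}{\lam|S|}\sum_{i\in\pa(S)} \tfrac{\mu_i t_i}{1-t_i}$ under the scalar budget $\sum_{i\in\pa(S)}\mu_i t_i = \lam|S| = \lam(S)$. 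Finally, Jensen's inequality applied to the convex $f(t)=t/(1-t)$ with weights $\mu_i/\mu(S)$ summing to one yields $\sum_{i\in\pa(S)} (\mu_i/\mu(S))\, f(t_i) \ge f\bigl(\lam(S)/\mu(S)\bigr)$, and rescaling by $\mu(S)/\lam(S)$ produces exactly $\eta_t^*+1 \ge 1/(1-\lam(S)/\mu(S))$.

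The main obstacle is picking the right transformation and convex function. After the sum swap, the slackening $\rho_i\ge \lam a_i/\mu_i$ is what reduces the per-worker contributions to a single scalar $t_i$ governed by the aggregate budget $\sum_i \mu_i t_i = \lam(S)$, at which point the convexity of $t/(1-t)$ yields a bound depending only on $\lam(S)/\mu(S)$. A naive uniform bound $1/(1-\rho_i)\ge 1/(1-\max_{i'}\rho_{i'})$ would go in the wrong direction and only recover Lemma~\ref{lem:<=}. One small care point: if $\lam(S)\ge\mu(S)$ for some $S$, then Claim~\ref{claim:main-3} forces $\eta_s^*\ge 1$, which is outside the assumed feasibility regime of $\obar{\prb}$; so in the meaningful cases we may assume $\lam(S)<\mu(S)$ and the right-hand side is finite, and the case $\rho_i=1$ for some $i$ with $x^*_{ij}>0$ makes the claim trivial because $\eta_t^*=\infty$.
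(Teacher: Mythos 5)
Your proposal is correct and follows the same route as the paper's proof: you fix an optimal solution to $\obar{\pra}$, lower bound the maximum by the average over $j\in S$, interchange the sums to get a per-worker expression over $\pa(S)$, and relax the full load $\rho_i$ to the within-$S$ load (your $t_i=\lam a_i/\mu_i$ is exactly the paper's $\rho_{i,S}(\x^*)$), using $\sum_{i\in\pa(S)}a_i=|S|$ from the no-rejection constraints. The only difference is the final step: the paper rewrites the bound as $-\mu(S)/\lam(S)+\frac{1}{\lam(S)}\sum_{i\in\pa(S)}\mu_i^2/(\mu_i-z_i)$, sets up an auxiliary minimization over the budget $\sum_i z_i=\lam(S)$, and argues by a local-perturbation claim that the minimizer has uniform $z_i/\mu_i=\lam(S)/\mu(S)$, whereas you apply Jensen's inequality to the convex map $t\mapsto t/(1-t)$ with weights $\mu_i/\mu(S)$ to reach $1/(1-\lam(S)/\mu(S))$ directly. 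These are equivalent uses of convexity, and your Jensen step is a clean, fully explicit one-line substitute for the paper's perturbation argument (whose details are left implicit there); your remarks on the degenerate cases $\lam(S)\ge\mu(S)$ and $\rho_i=1$ are consistent with the paper's standing feasibility assumption.
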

The two claims above together establish Lemma~\ref{lem:>=}. We present the proofs of the two Claims.

\begin{proof}[{\tbf{Proof of Claim~\ref{claim:main-3}}}] 
We first show $\eta_s^* \ge \lam(S)/\mu(S)$ for any $S \neq \emptyset, S \subseteq J$. Consider an optimal solution $\x^*=(x_{ij})$ for $\obar{\prb}$. Let $x_i=\sum_{j \sim i, j \in J} x_{ij}$.
\begin{align*}
&\eta_s^*=\max_{i \in I} \rho_i(\x^*) \ge \max_{i \in \pa(S)}\rho_i(\x^*) =\max_{i \in \pa(S)} (\lam x_i/\mu_i)\\
&\ge \bp{\sum_{i \in \pa(S)} \lam x_i}/\bp{\sum_{i \in \pa(S)} \mu_i} \ge \lam(S)/\mu(S),
\end{align*}
where the last inequality follows from 
\begin{align*}
\sum_{i \in \pa(S)} \lam x_i&=\lam \sum_{i \in \pa(S)} \sum_{ j \sim i, j \in J} x_{ij}\\
& \ge \lam  \sum_{ j \in S} \sum_{i \in I, i \sim j} x_{ij}=\lam \cdot |S|=\lam(S).
\end{align*} 
Now we show $\eta_s^* \le \max_{S \neq \emptyset, S\subseteq J}\lam(S)/\mu(S)$.  For an optimal solution  $\x^*=(x_{ij})$ of $\obar{\prb}$, let $I(\x^*)=\{i \in I: \rho_i(\x^*)=\eta_s^*\}$ be the set of node $i$ with saturated load under $\x^*$. Let $\y^*=(y_{ij})$ be an optimal solution of  $\obar{\prb}$ such that $I(\y^*)$ has the smallest size. Let $J(\y^*)=\{j \in J: \exists i \in I(\y^*) \mbox{~with~} i \sim j, y_{ij}>0 \}$ be set of non-zero neighbors of $I(\y^*)$ with respect to $\y^*$.

We claim that $\pa(J(\y^*))=I(\y^*)$. We show by contradiction as follows. Suppose there is  some $j \in J(\y^*)$ such that (1) there exists some $i \sim j, i \in I(\y^*)$ with $y_{ij}>0$ and $\rho_i(\y^*)=\eta_s^*$ and (2) there exists some $\bi \sim j$ with $\bi \notin I(\y^*)$, \ie  
$\rho_{\bi}(\y^*)<\eta_s^*$. Consider the following perturbation: $\bar{y}_{ij} \gets y_{ij} -\ep$ and $\bar{y}_{\bi,j} \gets y_{\bi,j}+\ep$ for an appropriate value of $\ep>0$, we could end up with another optimal solution that either has a strictly smaller size of $I(\bar{\y}^{*})$  if $|I(\y^*)|>1$ or a strictly better optimal value if $|I(\y^*)|=1$, which contradicts our assumption. Observe that $\eta_s^*=\rho_i(\y^*)=(\lam y_i/\mu_i)$ for every $i \in I(\y^*)$ with $y_i=\sum_{j \sim i} y_{ij}$. Set $S^*=J(\y^*)$ with $\pa(S^*)=I(\y^*)$, we have
\begin{align*}
\eta_s^*&=\bp{\sum_{i \in I(\y^*)} \lam y_i}/\bp{\sum_{i \in I(\y^*)} \mu_i}=
\bp{\sum_{i \in I(\y^*)} \lam \sum_{j \sim i}y_{ij}}/\mu(S^*)\\
&=\bp{\sum_{j \in J(\y^*)} \lam \sum_{i \sim j}y_{ij}}/\mu(S^*)=\lam(S^*)/\mu(S^*).
\end{align*}
Therefore, we get $\eta_s^* \le \max_{S \neq \emptyset, S\subseteq J}\lam(S)/\mu(S)$.
\end{proof}

\begin{proof}[{\tbf{Proof of Claim~\ref{claim:>=}}}] 

Consider any optimal solution $\x^*=(x_{ij})$ for $\obar{\pra}$. Recall that $\Lam=\lam(S)=\lam \cdot |S|$ and $\Phi=\mu(S)=\sum_{i \in \pa(S)} \mu_i$. We see $\Lam/\Phi=\lam(S)/\mu(S) \le \eta_s^* \le 1$ due to Claim~\ref{claim:main-3}. Set $x_i= \sum_{j \sim i, j \in J} x_{ij}$ and $x_i(S)=\sum_{j \sim i, j \in S}x_{ij}$. Let $\rho_i(\x^*)=(\lam/\mu_i) \sum_{j \sim i, j\in J}x_{ij}=(\lam/\mu_i) \cdot x_i$ and  $\rho_{i,S}(\x^*)=(\lam/\mu_i)  \sum_{j \sim i, j\in S} x_{ij}=(\lam/\mu_i) \cdot x_i(S)$. Thus, $\rho_{i}(\x^*) \ge \rho_{i,S}(\x^*)$.
\begingroup
\allowdisplaybreaks
\begin{align*}
&\eta_t^*+1 =\max_{j \in J} \sbp{\bw_j(\x_a^*) +1}\ge \frac{1}{|S|} \sum_{j \in S} \sbp{\bw_j(\x_a^*) +1} = \frac{1}{|S|} \sum_{j \in S} \sum_{i \sim j} \frac{x_{ij}}{1-\rho_i(\x^*)}
= \frac{1}{|S|} \sum_{i \in \pa(S)} \sum_{j \sim i, j \in S} \frac{x_{ij}}{1-\rho_i(\x^*)}\\
&=\frac{1}{|S|} \sum_{i \in \pa(S)}  \frac{x_i(S)}{1-\rho_i(\x^*)} \ge 
\frac{1}{|S|} \sum_{i \in \pa(S)}  \frac{x_i(S)}{1-\rho_{i,S}(\x^*)} \\
&=\frac{1}{\lam \cdot |S|} \sum_{i \in \pa(S)}  \mu_i \cdot \frac{(\lam/\mu_i) \cdot x_i(S)}{1-\rho_{i,S}(\x^*)}=\frac{1}{\Lam}\sum_{i \in \pa(S)}   \frac{\mu_i \cdot \rho_{i,S}(\x^*)}{1-\rho_{i,S}(\x^*)}=\frac{1}{\Lam}\sum_{i \in \pa(S)}  \mu_i \cdot \bp{-1+\frac{1}{1-\rho_{i,S} (\x^*)}}\\
&=-\frac{\Phi}{\Lam}+\frac{1}{\Lam}\sum_{i \in \pa(S)} \frac{\mu_i}{1-\rho_{i,S} (\x^*)}=
-\frac{\Phi}{\Lam}+\frac{1}{\Lam}\sum_{i \in \pa(S)} \frac{\mu_i}{1-(\lam/\mu_i) \cdot x_i(S)}=
-\frac{\Phi}{\Lam}+\frac{1}{\Lam}\sum_{i \in \pa(S)} \frac{\mu_i^2}{\mu_i-\lam \cdot x_i(S)}.
\end{align*}
\endgroup
Set $z_i=\lam \cdot x_i(S)$. Observe that (1)
\begin{align*}
\sum_{i \in \pa(S)}z_i &=\sum_{i \in \pa(S)} \lam \cdot x_i(S) =\lam\sum_{i \in \pa(S)} \sum_{j \sim i, j \in S} x_{ij}=\lam \cdot \sum_{j \in S} \sum_{i \sim j, i \in I} x_{ij}=\lam \cdot |S|=\Lam,
\end{align*}
and (2) for any $i \in I$, we have $z_i \le \mu_i$ since $z_i/\mu_i=(\lam/\mu_i) \cdot x_i(S)=\rho_{i,S} (\x^*) \le \rho_i(\x^*)\le 1$. For any fixed values of $(\mu_i)$, let $g_i(z_i):=\mu_i^2/(\mu_i-z_i)$ be a function of $z_i \in [0, \mu_i]$. Consider a minimization program below,
\begin{align}\label{eqn:claim-1}
\bp{ \min \sum_{i \in \pa(S)} g_i(z_i): 0 \le z_i \le \mu_i, \forall i\in \pa(S); \sum_{i \in \pa(S)}z_i=\Lam.}
\end{align}
By local perturbation, we claim that for any given $(\mu_i)$, Program~\eqref{eqn:claim-1} has a unique optimal solution $\z^*=(z_i^*)$ such that $\mu_i/(\mu_i-z^*_i)$ takes a uniform value for every $i \in \pa(S)$, and so does $z^*_i/\mu_i$. Let $\beta=z^*_i/\mu_i$ for every $i \in \pa(S)$. We see that $\beta=\sum_{i \in \pa(S)}z^*_i/ \sum_{i \in \pa(S)} \mu_i=\Lam/\Phi$. Thus, we claim that
\begin{align*}
\eta_t^*+1 &\ge  -\frac{\Phi}{\Lam}+\frac{1}{\Lam}\sum_{i \in \pa(S)} \frac{\mu_i^2}{\mu_i-\lam \cdot x_i(S)}= -\frac{\Phi}{\Lam}+\frac{1}{\Lam}\sum_{i \in \pa(S)} g_i(z_i)\\
&\ge 
 -\frac{\Phi}{\Lam}+\frac{1}{\Lam}\sum_{i \in \pa(S)} g_i(z_i^*)= -\frac{\Phi}{\Lam}+\frac{1}{\Lam}\sum_{i \in \pa(S)} \frac{\mu_i}{1-z_i^*/\mu_i}\\
 &=-\frac{\Phi}{\Lam}+\frac{\Phi}{\Lam} \cdot  \frac{1}{1-\Lam/\Phi}=\frac{\Phi}{\Lam} \cdot \bp{\frac{1}{1-\Lam/\Phi}-1}=\frac{1}{1-\Lam/\Phi}=\frac{1}{1-\lam(S)/\mu(S)}.
\end{align*}
\end{proof}

\section{Details for the Experiments Settings}
\subsection{Approaches Used for Determining the Duration of the Tasks}
\label{app:durapp}
We propose three approaches for determining the duration of the tasks.
In the first approach, we aim to study settings where $\kappa=1$. We calculated the average duration time for each (human) teleoperator across all allowed tasks and computed the overall average. This average was then used as the mean in the exponential distribution. By setting $\kappa=1$ in this method, the optimal solution of $\prb$ also became an optimal solution of $\pra$.
In the second approach, we aim at varying $\kappa$ by assigning different mean values to the various tasks. These mean values were chosen around the mean calculated in the previous approach. This allowed us to assess the performance of our algorithms and heuristics in scenarios where there was significant variance in the duration times of the different tasks.
Finally, the third approach considers the average time taken by the teleoperators to perform the tasks in the simulation for each specific teleoperator-task combination. This approach helps us to make the settings as similar as possible to the real data.

The exact implementation of the second approach is as follows:
The mean duration for each teleoperator is the average task duration she is authorized to undertake. The durations follow an exponential distribution with means calculated as $(2\cdot \kappa\cdot x) / (1 + k)$ and $2x - (2\cdot \kappa\cdot x) / (1 + k)$ for the first and last tasks respectively, where $x$ represents the average task duration for the teleoperator. The average values range from 3.33 to 8 seconds, with a standard deviation 1.58. For the remaining tasks, the means are uniformly selected within the range between the first and last task means. In cases where a teleoperator is permitted to perform only one task, the mean is set as $x$ for the exponential distribution.
The experiments in the main paper use the second approach, while both the second and third approaches are presented in the appendix.

\subsection{Generating the Task Arrival Rates}
To establish the task arrival rates, we followed a specific procedure. We used the average number of task arrivals per day (100,000) used in Viden et al.~\cite{viden} as a basis and examined the neighborhood of this number (e.g., 60000-140000). For each task type, we multiplied this average by the weight assigned to that specific task type, resulting in the final $\lambda$ parameter for the normal distribution of task types. 

Having the task arrival rates, the actual arrival times are generated by the following procedure:
Using an exponential distribution with a mean of $1/\lambda$ we calculate a set of arrival times. For each arrival time we decide the type of the arriving task by using configurable probabilities.
This approach allowed us to model the arrival times of tasks and analyze the system's performance under various task-type balances.

\subsection{Experiments Environment and Some More Technical Details}
In our experiments, we ran simulations using Python and Matlab. The simulation ran for a (virtual) period of 4 weeks. The nonlinear minimax problem $\pra$ was solved with the Matlab function fmincon and the problem $\prb$ was cast as a linear program and solved with the Matlab function linprog. Most of the experiments were carried out on a Windows laptop. Other experiments were run on a Linux server with 98 cores to save time.
It is important to emphasize that the \prb method is valid only when the available workers can handle all the tasks. Therefore, we focus on problems for which $\prb$ can provide a feasible solution.

\section{Additional Figures for the Experiments Section}
\subsection{Results for a Different Duration Distribution}

Figures \ref{fig:load2} and \ref{fig:lam2} are similar to figures ~\ref{fig:window}(c,d,e,f) from the Experiments section. The only difference is that the parameter $\mu_{i,j}$ is defined as the real average time it takes the teleoperator $i$ to perform a task of type $j$. Therefore, the value of $\kappa$ varies for the different workers.

\begin{figure}
    \centering
    \includegraphics[width=0.33\columnwidth]{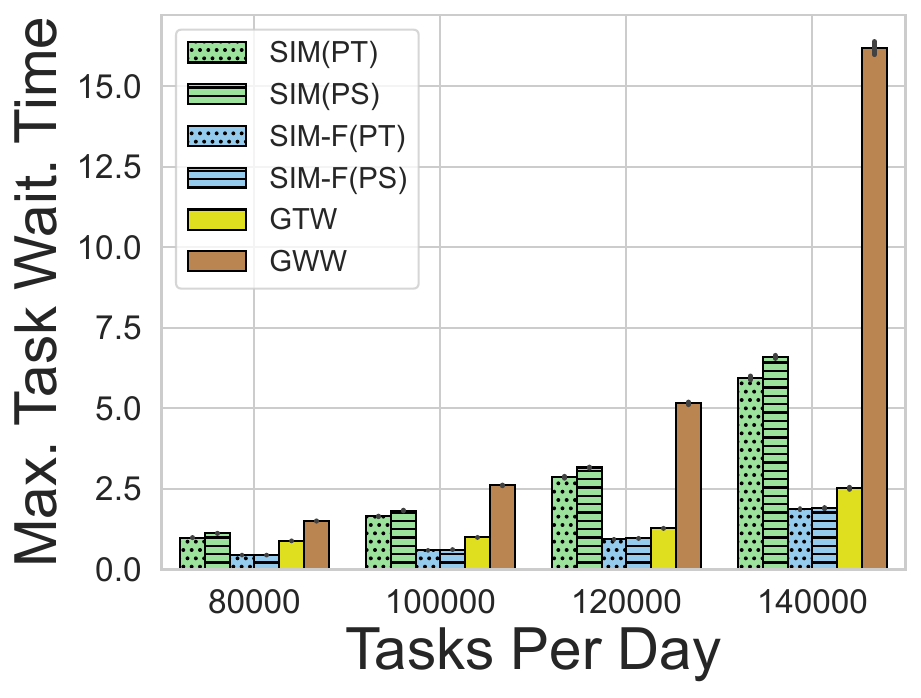}
    \includegraphics[width=0.33\columnwidth]{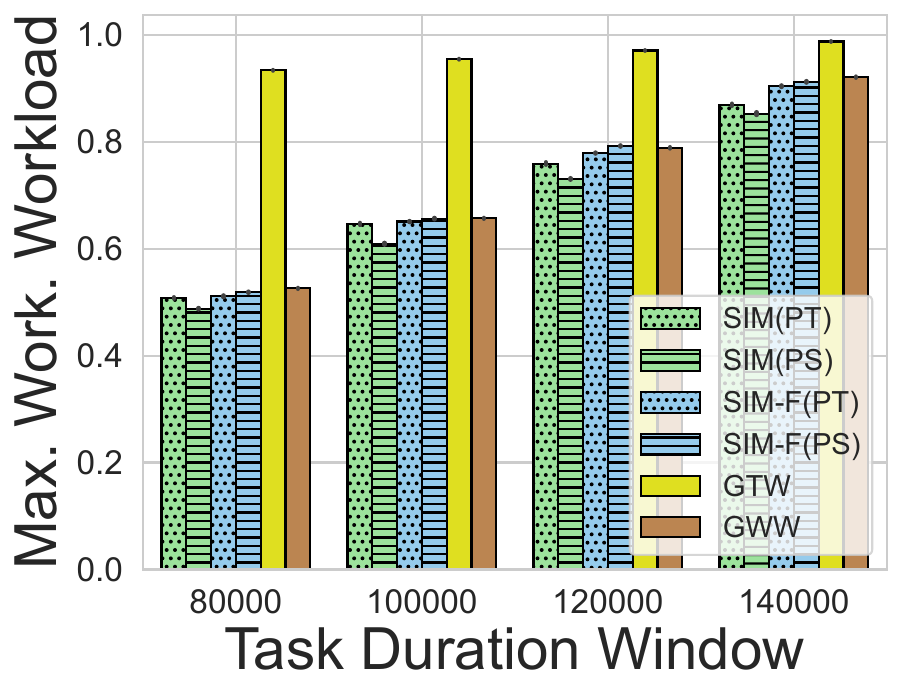}
    \caption{Maximum Task Waiting Time (left) and Maximum worker workload (right) under varying task arrival loads. The x-axis represents the expected number of arrivals per day. In these graphs, the arrival distributions of all tasks are equal.} 
    \label{fig:load2}
\end{figure}

\begin{figure}
    \centering
    \includegraphics[width=0.33\columnwidth]{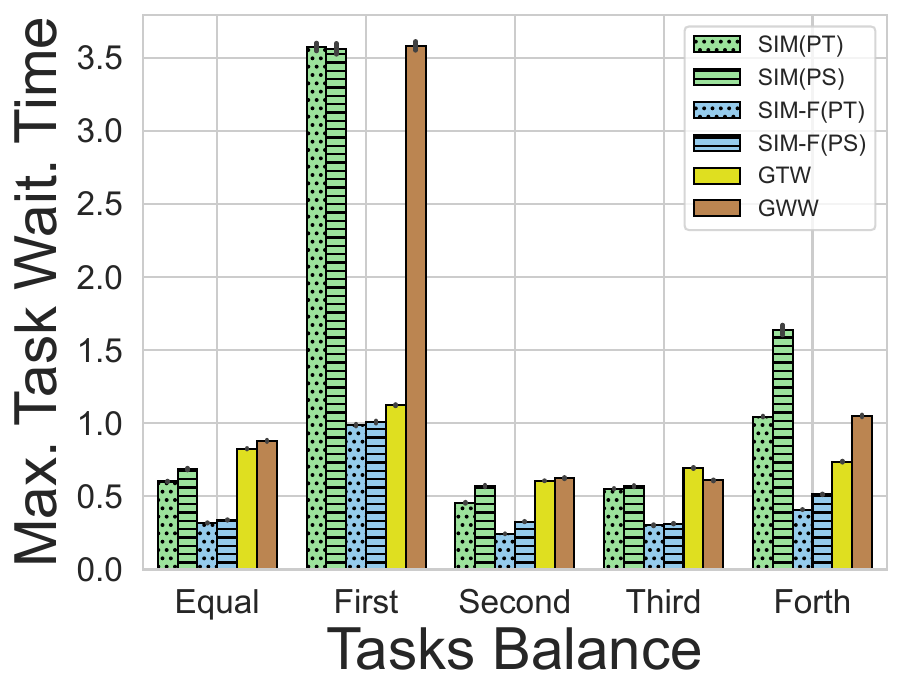}
    \includegraphics[width=0.33\columnwidth]{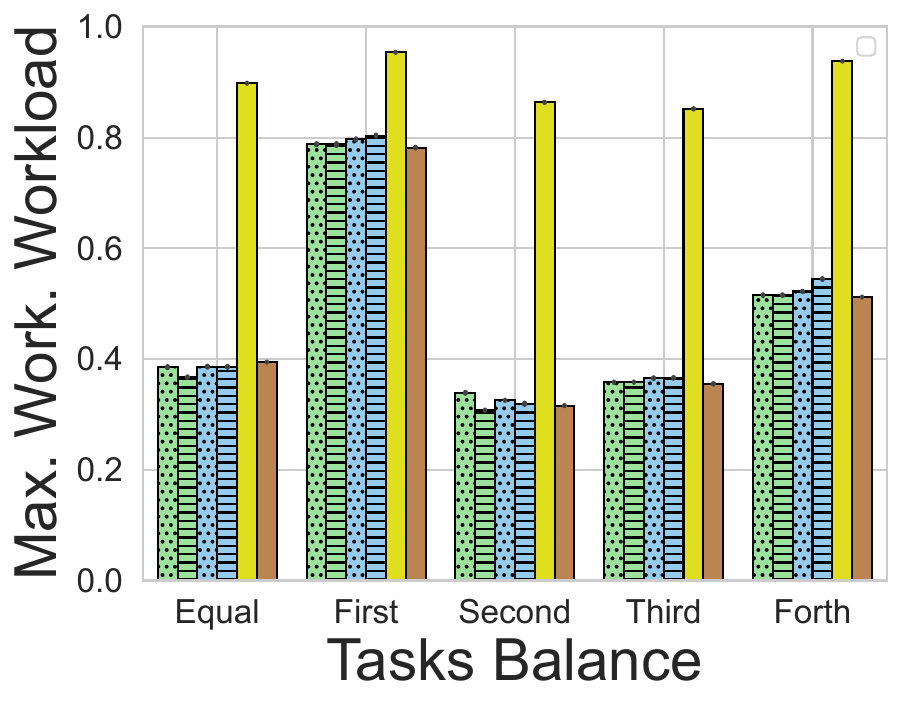}
    \caption{Task waiting time (left) and worker workload (right) for different arrival distributions: equal task arrival vs. one task type arriving seven times more than the others (task load was 60000 requests per day). }
    \label{fig:lam2}
\end{figure}

\subsection{Results for Different Numbers of Task Types}

In this section, we vary the number of task types and analyze the impact on the maximum waiting time of tasks and the maximum workload of workers. In Figure \ref{fig:apnnewtasks} we used the second approach(see section \ref{app:durapp}) to define the task duration parameter ($\mu_{i,j}$) for tasks 1-4, while in Figure~\ref{fig:apnorigtasks} we used the third approach.
In both Figures, to define the parameters for a new task not present in the dataset (5 and 6), we selected a mean value uniformly at random from the range 4 and 11 (the means of the original tasks) and used the standard deviation of the actual duration values to define a normal distribution. This normal distribution was used to determine the duration for each worker. To obtain a reasonable duration, we set a minimum duration of at least 1 second for all workers.

The graph of the input network was created according to these durations. An edge exists between a worker and a task type only if the average time required by the worker to complete a task of that type is at most equal to the median duration for that task type. In the final step, the duration parameters for the new tasks were determined based on these durations, using the second and third approaches for Figures~\ref{fig:apnnewtasks} and \ref{fig:apnorigtasks} respectively.

In Figures \ref{fig:apnorigtasks} and \ref{fig:apnnewtasks} we see that having only 2 task types leads to a worse performance both in the maximum waiting time of the tasks and in the maximum worker workload measures. For 3-6 task types, we find that the performance in Figure \ref{fig:apnorigtasks} decreases for both measures as the number of task types increases (at least for some methods), while the performance in Figure \ref{fig:apnnewtasks} is very similar in this range.

We conclude that although the number of task types is irrelevant when distributed uniformly among the workers, it is quite important if they are distributed differently (as in the real-data). This fact should be taken into account when modeling task types in an application.

\begin{figure}
    \centering
    \includegraphics[width=0.33\columnwidth]{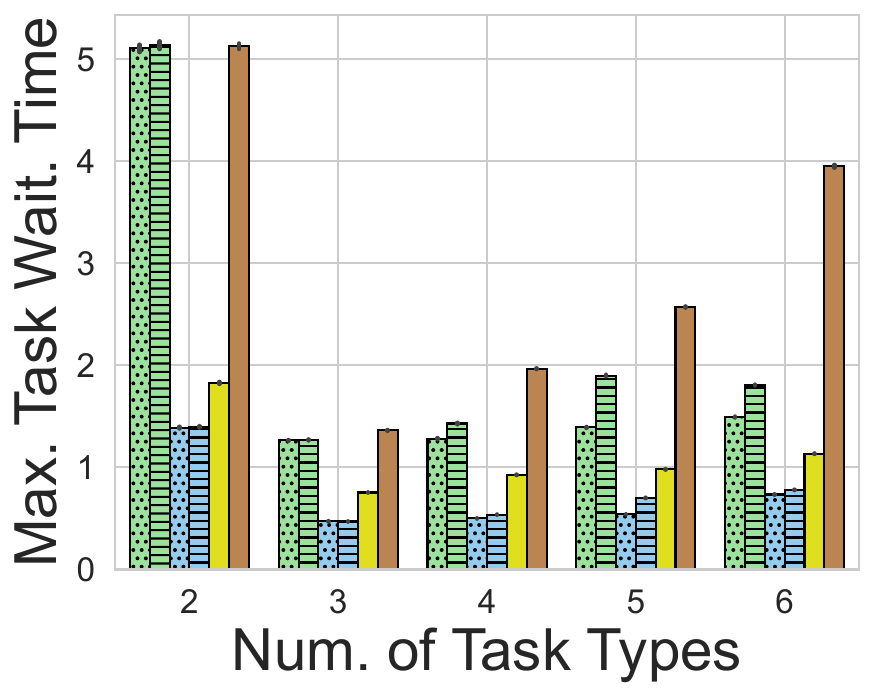}
    \includegraphics[width=0.33\columnwidth]{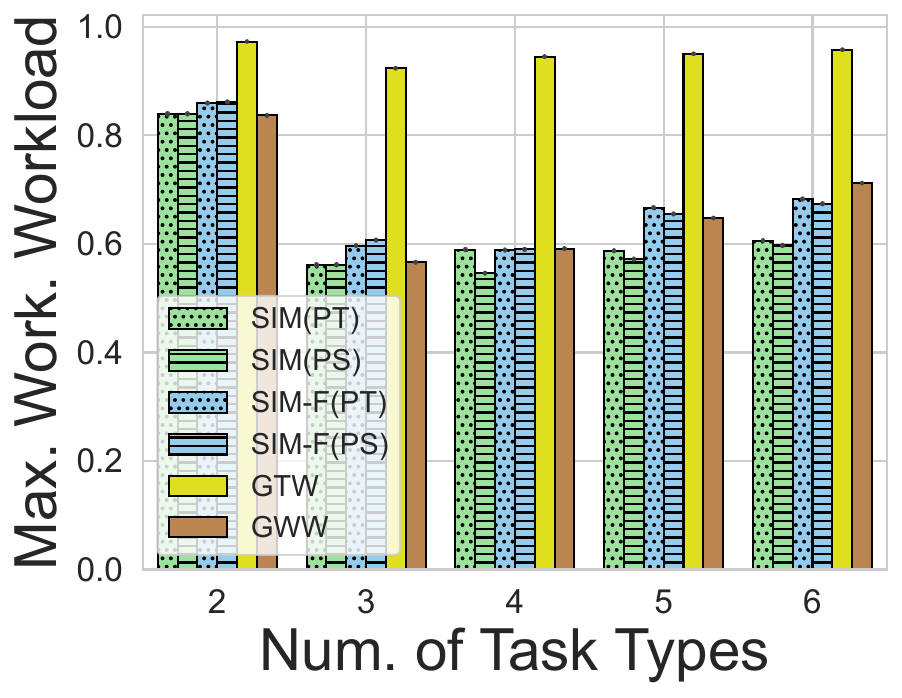}
    \caption{Task waiting time (left) and worker workload (right) for different numbers of task types(task load was 90000 requests per day; $\mu_{i,j}$ varies according to the real and synthetic data and $\kappa$ is concluded accordingly). }
    \label{fig:apnorigtasks}
\end{figure}

\begin{figure}
    \centering
    \includegraphics[width=0.33\columnwidth]{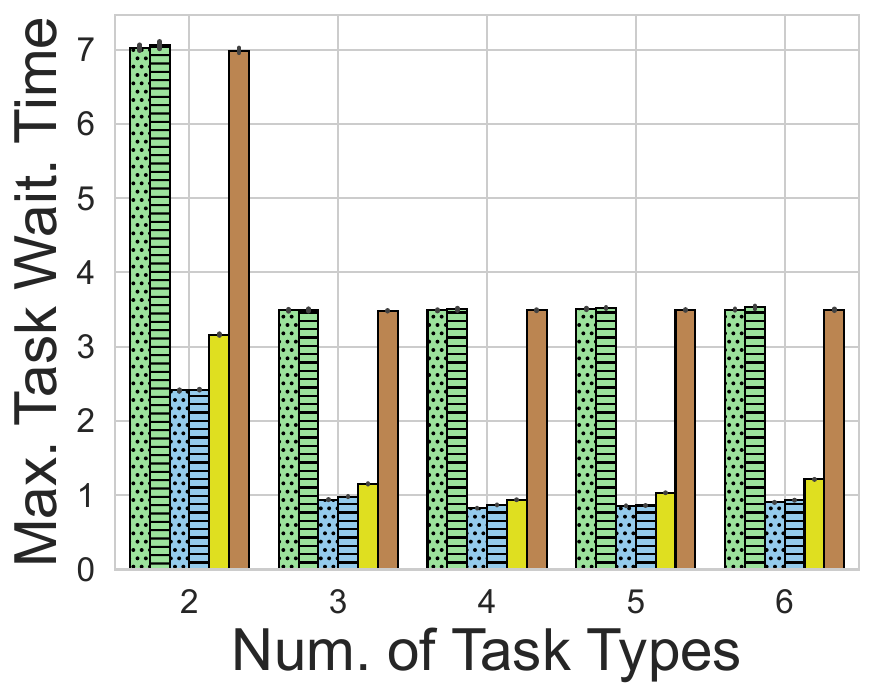}
    \includegraphics[width=0.33\columnwidth]{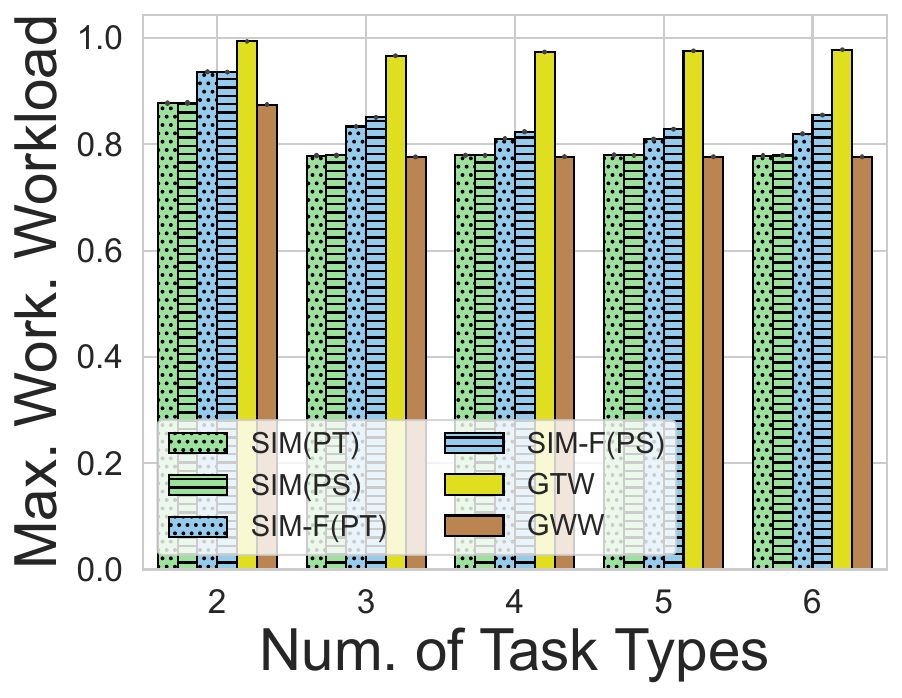}
    \caption{Task waiting time (left) and worker workload (right) for different numbers of task types(task load was 120000 requests per day and $\kappa=1$). }
    \label{fig:apnnewtasks}
\end{figure}

\subsection{Results for Different Numbers of Workers}

In this section, we vary the number of workers and analyze the effects on the maximum waiting time of the tasks and the maximum workload of the workers. In Figure \ref{fig:apnnewworkers} we used the second approach(section \ref{app:durapp}) to define the task duration parameter ($\mu_{i,j}$) for workers 1-9, while in Figure~\ref{fig:apnorigworkers} we used the third approach.
In both Figures, we have chosen a (uniformly) random duration from the range of existing durations to define the average durations for a new worker (10 to 14).
The graph of the input network was defined using these durations, where an edge between a worker and a task type exists only if the average duration of the worker performing a task of that type is at most equal to the median duration for that task type. Finally, the duration parameters of the new tasks were determined based on these durations according to the second and third approaches.

We find that, as expected, the performance of the system improves as the number of workers increases. Therefore, when deciding how many workers to acquire, we should look for a reasonable balance between system performance and the cost of additional workers.

\begin{figure}
    \centering
    \includegraphics[width=0.33\columnwidth]{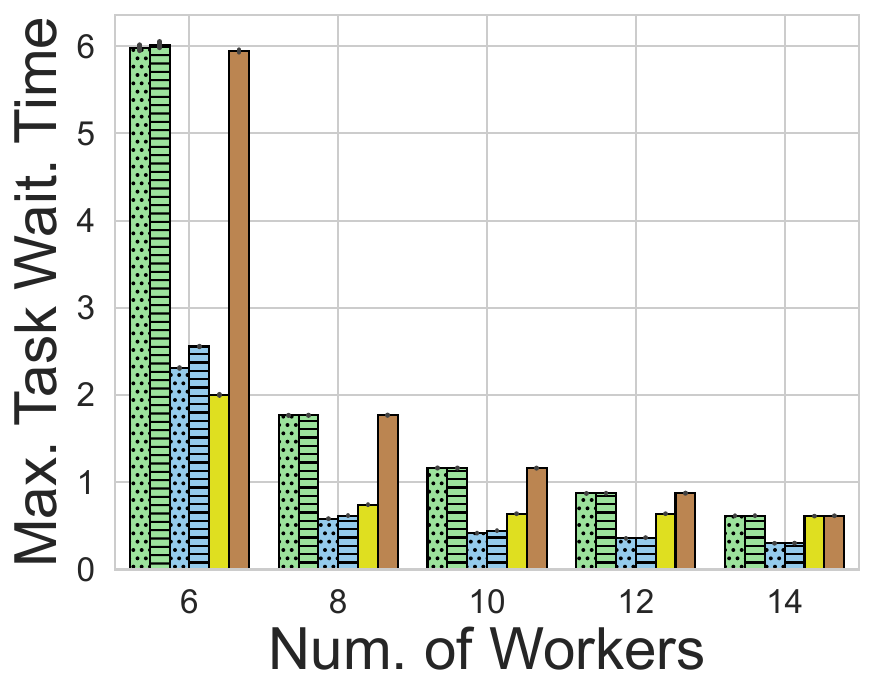}
    \includegraphics[width=0.33\columnwidth]{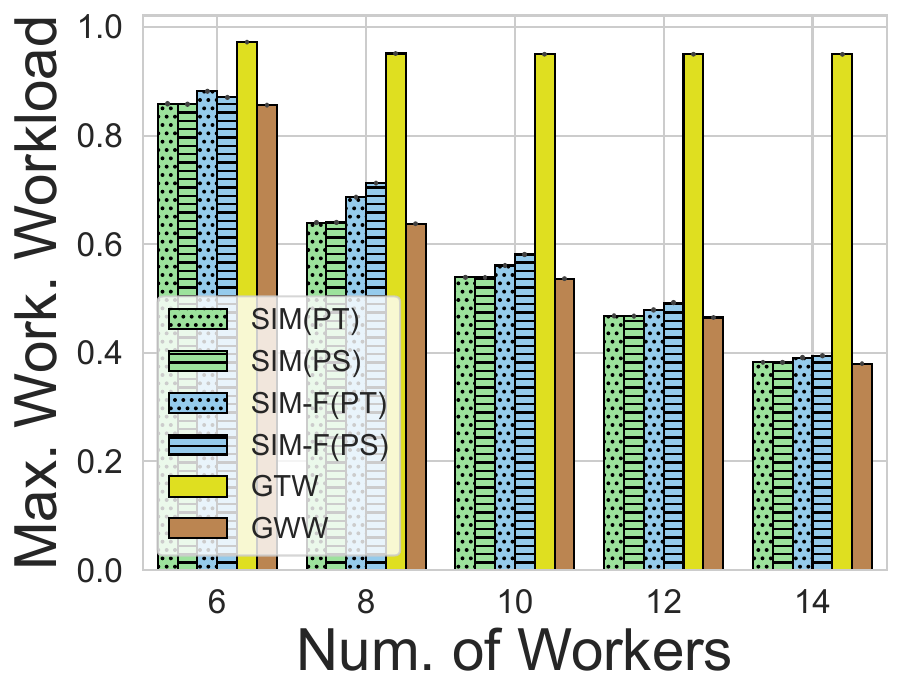}
    \caption{Task waiting time (left) and worker workload (right) for different numbers of workers(task load was 90000 requests per day and $\kappa$ is determined by the values of $\mu_{i,j}$). }
    \label{fig:apnnewworkers}
\end{figure}
\begin{figure}
    \centering
    \includegraphics[width=0.33\columnwidth]{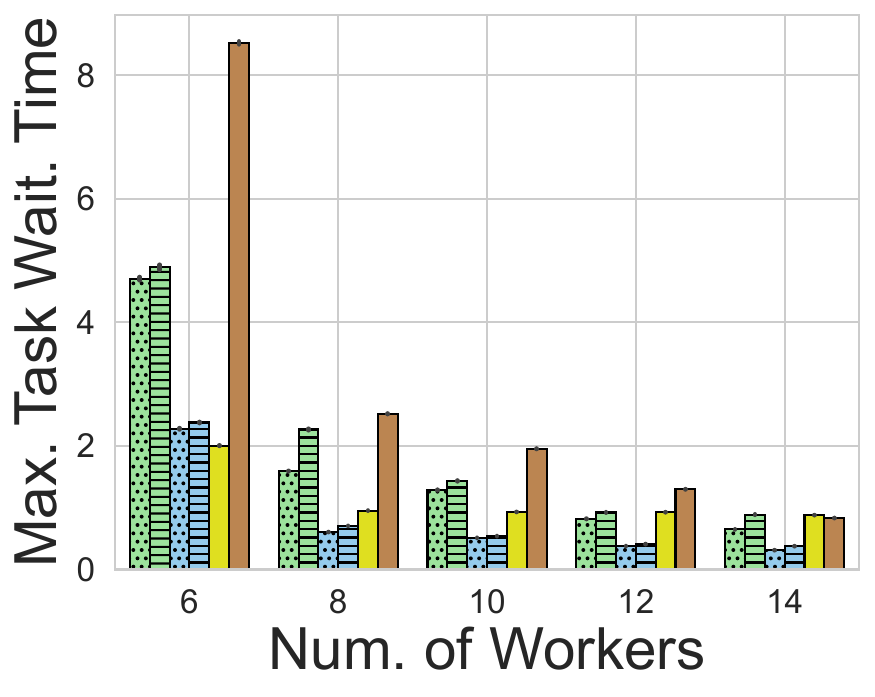}
    \includegraphics[width=0.33\columnwidth]{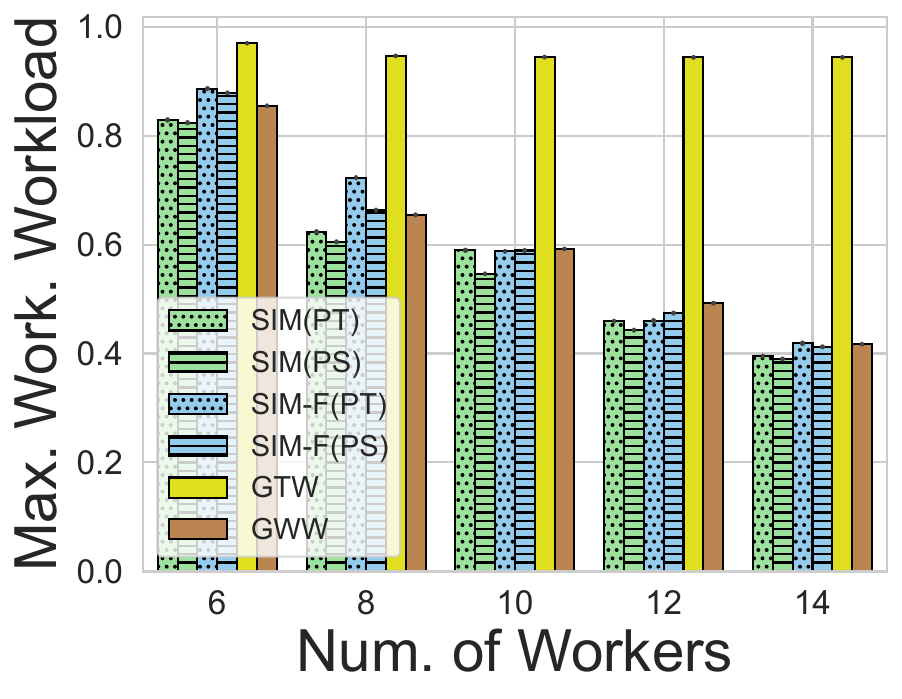}
    \caption{Task waiting time (left) and worker workload (right) for different numbers of workers types(task load was 90000 requests per day and $\kappa=1$ ). }
    \label{fig:apnorigworkers}
\end{figure}

\fi


\end{document}